\documentclass[english,11pt,a4paper]{article}
\usepackage[english]{babel}
\usepackage{latexsym,amssymb,amsmath,amsfonts,amscd,epsfig,amsthm,color,mathrsfs}

\usepackage{aliascnt}
\usepackage{graphicx}
\usepackage{tikz}
\usepackage{quantikz}
\usepackage{float}
\usepackage[left=2.3cm,right=2.3cm,top=2.3cm,bottom=2.3cm]{geometry}
\usepackage{ifthen}
\usepackage{authblk}
\usepackage{colonequals}
\usepackage{array}
\usepackage{tikz}
\usetikzlibrary{decorations.pathreplacing}

\makeatletter
\newtheorem*{rep@theorem}{\rep@title}
\newcommand{\newreptheorem}[2]{%
\newenvironment{rep#1}[1]{%
 \def\rep@title{#2 \ref*{##1}}%
 \begin{rep@theorem}}%
 {\end{rep@theorem}}}
\makeatother

\usepackage{hyperref}
\hypersetup{
    bookmarksnumbered=true, % If Acrobat bookmarks are requested, include section numbers
    unicode=false, % non-Latin characters in Acrobat�s bookmarks
    pdfstartview={FitH}, % fits the width of the page to the window
    pdftitle={QMA=QMA1 with an infinite counter}, % title
    pdfauthor={Stacey Jeffery and Freek Witteveen}, % author
    pdfsubject={}, % subject of the document
    pdfcreator={}, % creator of the document
    pdfproducer={}, % producer of the document
    pdfkeywords={}, % list of keywords
    pdfnewwindow=true, % links in new window
    colorlinks=true, % false: boxed links; true: colored links
    linkcolor=blue, % color of internal links
    citecolor=blue, % color of links to bibliography
    filecolor=blue, % color of file links
    urlcolor=blue % color of external links
}

\usepackage[nameinlink]{cleveref}
\Crefname{section}{Section}{Sections}

\newtheorem{theorem}{Theorem}[section]

% Definition
\newaliascnt{definition}{theorem}
\newtheorem{definition}[definition]{Definition}
\aliascntresetthe{definition}
\Crefname{definition}{Definition}{Definitions}

% Lemma
\newaliascnt{lemma}{theorem}
\newtheorem{lemma}[lemma]{Lemma}
\aliascntresetthe{lemma}
\Crefname{lemma}{Lemma}{Lemmas}

% Proposition
\newaliascnt{proposition}{theorem}
\newtheorem{proposition}[proposition]{Proposition}
\aliascntresetthe{proposition}
\Crefname{proposition}{Proposition}{Propositions}

% Corollary
\newaliascnt{corollary}{theorem}
\newtheorem{corollary}[corollary]{Corollary}
\aliascntresetthe{corollary}
\Crefname{corollary}{Corollary}{Corollaries}

% Claim
\newaliascnt{claim}{theorem}

\aliascntresetthe{claim}
\Crefname{claim}{Claim}{Claims}

% Example
\newaliascnt{example}{theorem}

\aliascntresetthe{example}
\Crefname{example}{Example}{Examples}

% Conjecture
\newaliascnt{conjecture}{theorem}

\aliascntresetthe{conjecture}
\Crefname{conjecture}{Conjecture}{Conjectures}

% Aside
\newaliascnt{aside}{theorem}

\aliascntresetthe{aside}
\Crefname{aside}{Aside}{Asides}

% Remark
\newaliascnt{remark}{theorem}
\newtheorem{remark}[remark]{Remark}
\aliascntresetthe{remark}
\Crefname{remark}{Remark}{Remarks}

\def\bigO#1{O\left(#1\right)}

\def\ket#1{{\lvert}#1\rangle}

\def\bra#1{{\langle}#1\rvert}
\def\braket#1#2{{{\langle}#1\vert}#2\rangle}
\def\abs#1{\left| #1 \right|}

\def\norm#1{\left\| #1 \right\|}
\DeclareMathOperator{\Tr}{Tr}

\DeclareMathOperator{\accept}{acc}

\newcommand{\CC}{\mathbb{C}}
\newcommand{\ZZ}{\mathbb{Z}}
\newcommand{\HH}{\mathcal{H}}

\newcommand{\incr}{[+1]}
\newcommand{\circincr}{+1}

\newcommand{\ketbra}[2]{|{#1}\rangle\!\langle{#2}|}
\newcommand{\ot}{\otimes}

\title{${\sf QMA}={\sf QMA}_1$ with an infinite counter}
\author[1,2]{Stacey Jeffery}
\author[1,3]{Freek Witteveen}
\affil[1]{QuSoft \& CWI, Amsterdam}
\affil[2]{University of Amsterdam}
\affil[3]{University of Copenhagen}
\date{}

\begin{document}

\maketitle

\vspace{-35pt}

\begin{abstract}
    A long-standing open problem in quantum complexity theory is whether ${\sf QMA}$, the quantum analogue of ${\sf NP}$, is equal to ${\sf QMA}_1$, its one-sided error variant. We show that ${\sf QMA}={\sf QMA}^{\infty}= {\sf QMA}_1^{\infty}$, where ${\sf QMA}_1^\infty$ is like ${\sf QMA}_1$, but the verifier has an infinite register, as part of their witness system, in which they can efficiently perform a shift (increment) operation. We call this register an ``infinite counter'', and compare it to a program counter in a Las Vegas algorithm. The result ${\sf QMA}={\sf QMA}^\infty$ means such an infinite register does not increase the power of ${\sf QMA}$, but does imply perfect completeness.

    By truncating our construction to finite dimensions, we get a ${\sf QMA}$-amplifier that only amplifies completeness, not soundness, but does so in significantly less time than previous ${\sf QMA}$ amplifiers. Our new construction achieves completeness $1-2^{-q}$ using $O(1)$ calls to each of the original verifier and its inverse, and $O(\log q)$ other gates, proving that ${\sf QMA}$ has completeness \emph{doubly} exponentially close to 1, i.e.~${\sf QMA}={\sf QMA}(1-2^{-2^r},2^{-r})$ for any polynomial $r$.
\end{abstract}

\section{Introduction}

The complexity class ${\sf QMA}$ is generally considered to be the quantum analogue of ${\sf NP}$, although there are several complexity classes that could claim this title, including the one-sided variant of ${\sf QMA}$, ${\sf QMA}_1$~(see \cite{gharibian23sevenFaces}).  ${\sf QMA}$ %is the quantum analogue of the randomized complexity class ${\sf MA}$, standing for ``Merlin-Arthur'', and it 
is the class of problems $(L_{\text{yes}},L_{\text{no}})\subseteq \{0,1\}^*\times\{0,1\}^*$ such that there exists a polynomial-time quantum verifier $V(x)$ such that (see \Cref{def:QMA} for a formal definition):
\begin{description}
    \item[Completeness:] For all $x\in L_{\text{yes}}$, there exists a witness $\ket{\psi}$ %(which we can think of as being supplied by Merlin) 
          such that $V(x)$ %(Arthur) 
          accepts $\ket{\psi}$ with probability at least $c=2/3$.
    \item[Soundness:] For all $x\in L_{\text{no}}$, for all witnesses $\ket{\psi}$, $V(x)$ accepts with probability at most $s=1/3$.
\end{description}
Above, we can take any $s\geq 2^{-{\sf poly}(n)}$ and $c\leq 1-2^{-{\sf poly}(n)}$ such that $c-s\geq \frac{1}{{\sf poly}(n)}$, where $n$ is the length of the instance $x$, without changing the complexity class.
If we take $c=1$ (i.e. \emph{perfect completeness}), the class is ${\sf QMA}_1$, and it is not known whether this is the same class as ${\sf QMA}$.

${\sf QMA}$ is actually the quantum analogue of the class ${\sf MA}$, which is the same as ${\sf QMA}$, except that we restrict to classical circuits $V(x)$ and classical witnesses. The idea behind the name ${\sf MA}$, standing for ``Merlin-Arthur'', is that we can think of the verifier as representing a single-round protocol between a computationally unbounded \emph{prover}, Merlin, and a polynomial-time \emph{verifier} Arthur. Merlin tries to convince Arthur that $x\in L_{\text{yes}}$ by sending him a witness, and Arthur checks this witness using $V(x)$.

The class ${\sf QMA}$ is the more natural quantum analogue of ${\sf NP}$ than ${\sf QMA}_1$, since it makes sense to let quantum algorithms have two-sided error. Quantum algorithms tend to work imperfectly, and even a perfect quantum algorithm would likely exhibit some error once it is subject to noise in a real-world quantum computer. Moreover, an exact quantum algorithm in one universal quantum gate set will possibly become inexact when converted to another. So it is not practically interesting to restrict to quantum algorithms with no error, or even one-sided error. However, classically we have ${\sf MA}={\sf MA}_1$, raising the question of whether the same holds true for quantum, and if not, if there is some deep and interesting reason for this. It is even known to hold for ${\sf QMA}$ with classical witnesses -- i.e.~${\sf QCMA}={\sf QCMA}_1$~\cite{jordan2012QCMAisQCMA1}, as well as for a variant of ${\sf QMA}$ with a constant number of pre-shared EPR pairs \cite{kobayashi2013stronger} and for ${\sf preciseQMA}$, a modification of ${\sf QMA}$ with an exponentially small soundness-completeness gap \cite{fefferman2018complete}.

There are other reasons the class ${\sf QMA}_1$ is problematic. Because, as mentioned above, one-sided error is a gate-set-dependent property, there is no single definition of ${\sf QMA}_1$, but rather, it is always defined with respect to a particular choice of gate set (for a discussion of this, and progress towards a resolution, see~\cite{rudolph24universalQMA1}). One thing we do not want our complexity classes to be is sensitive to details as low-level as \emph{gate set}.
However, there are problems that are ${\sf QMA}_1$-hard~\cite{bravyi2006frustrationFree,gosset2013QMA13SAT,crichigno2024cliqueHomology,rudolph2025quantum2SAT,rudolph24universalQMA1}, so we also cannot simply declare this an irrelevant class and ignore it.
Showing that ${\sf QMA}={\sf QMA}_1$ would resolve the gate-set dependence of ${\sf QMA}_1$, as ${\sf QMA}$ does not suffer from gate set sensitivity.
There exists a quantum oracle relative to which ${\sf QMA}\neq {\sf QMA}_1$~\cite{aaronson09perfectCompQMA} (but this also separates ${\sf QCMA}$ from ${\sf QCMA}_1$ which are known to be equal).
This quantum oracle separation relies crucially on the dimension of the verifier's system being finite.

\subsection{Contributions}

\paragraph{Perfect-completeness of ${\sf QMA}^\infty$.} We make progress towards resolving the question of ${\sf QMA}$ vs.~${\sf QMA}_1$ by showing that they are equal when the prover and the verifier are allowed to use an infinite-dimensional register,\footnote{We use the convention $0\in\mathbb{N}$.} $\mathrm{span}\{\ket{d}:d\in\mathbb{N}\}$, on which the increment operation\footnote{We use the increment \emph{isometry} to keep the model minimal, but we could just as well have defined the model by letting the infinite-dimensional register have negative numbers as well, so that incrementing is \emph{unitary} (see \Cref{rmk:extensions}).}, $\ket{d}\mapsto \ket{d+1}$ can be efficiently performed, controlled on another qubit register, and we can efficiently check if $d=0$. Formally, we define ${\sf QMA}_1^\infty$ to be ${\sf QMA}_1$ with such verifiers (see \Cref{sec:infinite operations} for the formal definition). This means that the verifier can receive an infinite-dimensional witness, which is crucial for our results. We prove the following.
\begin{theorem}\label{thm:QMAisQMA1-inf}
    ${\sf QMA}={\sf QMA}_1^\infty$.
\end{theorem}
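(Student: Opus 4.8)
The inclusion ${\sf QMA}_1^\infty \subseteq {\sf QMA}$ is the easy direction: given a ${\sf QMA}_1^\infty$ verifier, I would argue that allowing an infinite-dimensional witness together with the increment isometry and the zero-test does not take us outside ${\sf QMA}$. The plan here is to show that the infinite counter can be simulated, up to tiny error, by a sufficiently large \emph{finite} counter; since the verifier runs in polynomial time it can only increment the counter polynomially many times, so a counter of size $2^{{\sf poly}(n)}$ suffices to capture its behavior on yes-instances with negligible loss in completeness, and truncation can only hurt (never help) soundness. This gives a standard ${\sf QMA}$ verifier with two-sided error and the required completeness-soundness gap. I would be careful to note that the witness in the original protocol might genuinely be infinite-dimensional, so the argument must be that \emph{some} finite truncation of an optimal witness still gets accepted with probability close to the completeness bound — a pinching/approximation argument on the accepting subspace.

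The interesting direction is ${\sf QMA} \subseteq {\sf QMA}_1^\infty$: starting from an ordinary ${\sf QMA}$ verifier $V(x)$ with completeness $c = 2/3$ and soundness $s = 1/3$ (after standard amplification, completeness $1-2^{-{\sf poly}}$ and soundness $2^{-{\sf poly}}$), I must build a new verifier $V'(x)$ that uses an infinite counter, accepts good witnesses with probability \emph{exactly} $1$, and still rejects in the no-case with probability bounded away from $0$. The natural approach is a Las Vegas--style construction: $V'$ repeatedly simulates $V$ (using calls to $V$ and $V^\dagger$), and each time $V$ would ``reject'', instead of outputting a definite answer, $V'$ coherently increments the infinite counter, uncomputes, and tries again. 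Because the counter is genuinely infinite, there is no ceiling at which the process is forced to give up and declare rejection — the amplitude that would have caused a false rejection gets pushed out to ever-larger counter values, and on a yes-instance the residual ``not-yet-accepted'' amplitude can be made to vanish in the limit, yielding perfect completeness. On a no-instance, the best any witness can do is still leak a constant fraction of amplitude into the rejecting branch per round, so soundness is preserved. The infinite-dimensional witness enters because Merlin must supply, as part of the witness, the ``scratch space'' that makes this unbounded-repetition process coherent and reversible — essentially a superposition over how many rounds will be needed.

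The core technical object is therefore a single unitary (or isometry) gadget, acting on the verifier's workspace plus the counter, that implements one ``round'': run $V$, and conditioned on the flag being in the non-accepting state, increment the counter and rewind. I would analyze the fixed-point / eigenvalue structure of this gadget, showing that on yes-instances the all-zeros-counter accepting component is reached with probability $1$ in the appropriate limiting sense (perhaps via a telescoping/geometric-series argument on the per-round success probability $c$, since $\sum_k (1-c)^k c = 1$), and that the map is a genuine isometry so no probability is lost. The efficiency claims in the abstract (constant calls to $V$ and $V^\dagger$, $O(\log q)$ other gates for the finite truncation) suggest the gadget is built from $O(1)$ invocations of $V$ and a controlled increment, which is reassuringly simple; the subtlety is entirely in setting up the coherent bookkeeping so that the rewind is exact.

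I expect the main obstacle to be verifying \emph{soundness} of $V'(x)$ on no-instances: one has to rule out the possibility that a cleverly entangled infinite-dimensional witness — exploiting the unbounded counter as a resource — could drive the acceptance probability up toward $1$ even when $V$ itself only accepts with probability $s$. The argument must show that the per-round non-acceptance amplitude is bounded below by something depending only on $1-s$, uniformly over all (including infinite-dimensional) witnesses, so that the total rejecting amplitude summed over all counter values is bounded away from $0$. This likely requires a careful operator-norm or second-moment argument controlling how much the increment-and-rewind gadget can ``recycle'' amplitude between rounds, and is the place where the finiteness of the original verifier's \emph{output qubit} (as opposed to its workspace) is crucial — it is exactly the asymmetry that the quantum oracle separation of~\cite{aaronson09perfectCompQMA} fails to exploit once the counter is infinite.
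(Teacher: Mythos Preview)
Your intuition is pointed in the right direction, but the mechanism you describe for ${\sf QMA}\subseteq{\sf QMA}_1^\infty$ does not work as stated. You picture $V'$ as a Las Vegas loop that ``repeatedly simulates $V$\ldots and tries again'', with the not-yet-accepted amplitude ``vanishing in the limit''. But a ${\sf QMA}_1^\infty$ verifier is still a fixed polynomial-size circuit: it cannot run an unbounded (or witness-dependent) number of rounds, and after any fixed $k$ rounds there is still $(1-p)^k>0$ amplitude left over, so no finite iteration yields \emph{exact} completeness. You notice that the abstract promises $O(1)$ calls to $V,V^\dagger$ but never reconcile this with your iterative picture; the ``fixed-point of the round map'' idea is closer, but you do not identify what operator Merlin should diagonalize or why its spectrum cooperates.

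The paper's verifier is not iterative at all. It makes two or three calls to $V,V^\dagger$ and a \emph{single} controlled increment on $B$; the ``infinite repetition'' lives entirely in the witness. Merlin sends the geometric state $\ket{\psi}_B=\sqrt{1-\gamma^2}\sum_{d\ge 0}\gamma^d\ket{d}$ tensored with an optimal $V$-witness $\ket{w}$, where $\gamma$ is a specific function of the optimal acceptance probability $p$ (e.g.\ $\gamma=1-\tfrac{1}{2p}$). The technical ingredient you are missing is the Marriott--Watrous two-dimensional invariant subspace attached to $\ket{w}$ (\Cref{lem:rewinding}): inside that $2\times 2$ block one computes the reject amplitude of $V'$ in closed form as $(1-p)\sum_{d\ge 1}\lvert \psi_{d-1}(1-2p)+2p\,\psi_d\rvert^2$, and the choice of $\gamma$ makes every summand vanish identically---an exact amplitude cancellation, not a limit. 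Soundness is then handled either by removing the controlled gate via \Cref{lem:removing controlled gate} (the accept branch is small on no-instances), or by showing $P_{V'}$ is block-diagonal over the $P_V$-eigenspaces so that product witnesses $\ket{\psi}_B\ket{w^{(k)}}$ suffice; neither resembles the per-round recycling bound you anticipate. For the reverse inclusion your truncation plan is right in spirit, but ``some finite truncation of an optimal witness still gets accepted'' is precisely the hard step: the witness may spread its mass arbitrarily over $\mathbb{N}$, and the paper needs a separate combinatorial lemma (\Cref{lem:remove small consecutive subspace}) to locate a length-$\ell$ low-weight gap in every length-$D$ window, decoupling the counter into shiftable finite blocks.
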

\noindent In \Cref{sec:main}, we show how to turn any ${\sf QMA}$ verifier into a verifier with perfect completeness that uses an infinite counter, showing that ${\sf QMA}\subseteq{\sf QMA}_1^\infty$ (see \Cref{cor:QMAinQMA-inf-exp-soundness}, where we also give a more precise statement, including mild requirements on the gate set with respect to which ${\sf QMA}_1^\infty$ is defined), and the full statement follows from \Cref{cor:QMA1-inf-is-QMA-inf-is-QMA}.

The extent to which \Cref{thm:QMAisQMA1-inf} resolves the open question depends entirely on the reasonableness of the extra power we have given the verifier in ${\sf QMA}_1^\infty$. {There are two sources of potential additional computational power: the verifier receives an infinite sized witness, and the verifier can potentially perform more powerful computations on the infinite register.}
An infinite-dimensional register on which the verifier can perform too powerful operations could potentially make it trivial to give any ${\sf QMA}$ verifier perfect completeness.
Indeed, one can also show perfect completeness if the verifier has an infinite-dimensional register that can encode a real angle $\theta\in [0,2\pi)$, as part of the witness, in such a way that the verifier could implement a $\theta$-rotation controlled on this register~\cite{aaronson09perfectCompQMA,nagaj2009FastAmpQMA}. The infinite-dimensional power we give our verifier is much weaker, first in that the dimension of its infinite-dimensional register is countably infinite, and second, in that we needn't ever read or control on the state of this infinite-dimensional register (aside from controlling on whether it is 0), but only shift its value by 1, which has a physically realistic interpretation.

In fact, every classical ${\sf BPP}$ (or even $\sf P$) machine does have such an infinite-dimensional register. One part of a Turing Machine's state is the position of the head on the infinite tape, which can be described by a natural number. At any step, this part of the state can increase (or decrease) by one, and it is also natural to imagine the head can efficiently know when it is in position 0 (for example, by trying and failing to move left). In this light, it does not seem unnatural to allow a ${\sf QMA}$ or ${\sf QMA}_1$ verifier the same power\footnote{In fact, an infinite counter could be quite a bit \emph{less} powerful than an infinite tape you can move up and down, reading and writing.
}
and denote the resulting complexity classes by ${\sf QMA}^\infty$ and ${\sf QMA}_1^\infty$ (see \Cref{sec:infinite operations} for formal definitions).
In fact, we show that such an infinite register does not give any extra power to ${\sf QMA}$ (see \Cref{sec:proof truncating witness} for a proof):
\begin{theorem}\label{thm:QMA-inf}
    ${\sf QMA}={\sf QMA}^\infty$.
\end{theorem}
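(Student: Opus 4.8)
The plan is to show the two inclusions ${\sf QMA}\subseteq {\sf QMA}^\infty$ and ${\sf QMA}^\infty\subseteq {\sf QMA}$ separately. The first inclusion is trivial: a ${\sf QMA}^\infty$ verifier is allowed to simply never touch its infinite register (and never receive any witness on it), so it subsumes an ordinary ${\sf QMA}$ verifier. The content is in the reverse direction, and the main idea is a \emph{truncation} argument: given a ${\sf QMA}^\infty$ verifier $V$ running in time $T={\sf poly}(n)$, I would argue that because $V$ performs at most $T$ increment operations, the amplitude it can move onto basis states $\ket{d}$ with $d$ large is controlled, so replacing the infinite counter with a finite register of dimension $D=T+1$ (or some fixed ${\sf poly}(n)$), with the increment implemented as a partial isometry that does nothing — or behaves arbitrarily — on the top basis state, changes the acceptance probability negligibly. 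The resulting truncated verifier is an honest finite-dimensional quantum circuit, hence a ${\sf QMA}$ verifier, and the completeness/soundness gap is preserved up to an inverse-polynomial loss, which is fine by the standard robustness of ${\sf QMA}$'s parameters noted in the introduction.

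The key steps, in order, are as follows. First, fix a ${\sf QMA}^\infty$ verifier $V(x)$ with running time $T$ and observe that the only operations involving the infinite register are: controlled increments, the zero-test (a projective measurement onto $\ket{0}$ versus its complement), and preparation/measurement of ancilla qubits — no operation ever writes a superposition onto high counter values except by repeated increments. Second, prove a ``support lemma'': starting from any (possibly infinite-dimensional) witness state, after applying $V$'s circuit the component of the global state on counter basis states $\ket{d}$ with $d\ge D$ has norm that can be made an inverse polynomial small — this requires care because the \emph{witness itself} may be supported on arbitrarily large $d$. Third, use this to replace the witness by its projection onto counter values $< D_0$ for a suitable ${\sf poly}(n)$ bound $D_0$, incurring only inverse-polynomial error in acceptance probability, so that both the witness and the reachable states live in a ${\sf poly}(n)$-dimensional counter; then define the truncated verifier $V'$ acting on $\mathrm{span}\{\ket{0},\dots,\ket{D-1}\}$ with $D=D_0+T$, where increment maps $\ket{d}\mapsto\ket{d+1}$ for $d<D-1$ and (say) $\ket{D-1}\mapsto\ket{D-1}$, and argue its acceptance probability agrees with $V$'s up to inverse-polynomial error on the truncated witness. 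Fourth, check that $V'$ is a legal ${\sf QMA}$ verifier (finite dimension, poly-time, with a constant soundness-completeness gap after rescaling), concluding ${\sf QMA}^\infty\subseteq{\sf QMA}$.

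The main obstacle I anticipate is the witness side of the truncation, namely controlling the contribution of witness amplitude living on very large counter values. A malicious Merlin in the ${\sf QMA}^\infty$ model could in principle place the witness on basis states $\ket{d}$ for astronomically large $d$; one must argue either (i) that such components cannot help acceptance because the verifier can only move the counter by $\pm1$ a bounded number of times and never reads $d$ except through the zero-test, so far-away components evolve ``rigidly'' and their acceptance contribution can be matched by a bounded-$d$ state, or (ii) that one can, without loss of generality, assume the honest witness is supported on small $d$ and then bound how much a soundness-breaking witness on large $d$ could gain — this is where the argument has to be made quantitative and is the technically delicate part. A clean way to handle it is to note that the verifier's action commutes with the shift on the counter away from $0$, so the acceptance probability is a function only of the ``distance structure'' of the witness relative to $0$; I would formalize this via a decomposition of the counter Hilbert space into a boundary part near $0$ and a translation-invariant bulk, showing the bulk contributes a fixed, witness-independent amount and can be optimized by a low-$d$ state. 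Everything else — preservation of the gap, poly-time, finite dimension — is routine.
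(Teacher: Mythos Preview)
Your high-level plan is sound and you have correctly identified both the real difficulty (Merlin may place the witness on astronomically large $d$) and the key structural fact that rescues the argument (the verifier's action is shift-invariant away from $d=0$). That is exactly the spine of the paper's proof. However, the concrete realisation you sketch --- a single ``boundary near $0$'' plus a ``translation-invariant bulk'' --- does not close, and a couple of your intermediate claims are false as stated.

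First, your ``support lemma'' is simply wrong: if the witness lives entirely on $d=10^{100}$ and the verifier only increments, the post-circuit state lives entirely on $d\ge 10^{100}$; no choice of $D$ makes the tail small. You flag this yourself, but then the fix you offer is not sufficient. The assertion that ``the bulk contributes a fixed, witness-independent amount'' is incorrect: the bulk acceptance probability certainly depends on the bulk witness. What is true is that, by shift-invariance, the bulk acceptance POVM commutes with translation, so any bulk witness can be translated --- but translation does not make an infinitely-supported bulk witness compactly supported, and nothing in your outline bounds the \emph{size} of the support of a near-optimal bulk witness by a polynomial. A single ``boundary/bulk'' cut also leaves interference: with $\ell$ increments, the boundary part can leak $\ell$ steps into the bulk, producing cross terms you never control.

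The paper supplies exactly the missing ingredient. It partitions $\mathbb{N}$ into blocks of a fixed polynomial size and, in each block, uses a pigeonhole argument (\Cref{lem:remove small consecutive subspace}) to find a length-$\ell$ window carrying at most an $\ell/(\text{block size})$ fraction of the mass; excising these windows costs only $O(\sqrt{\ell\,2^{-m}})$ in trace distance. Now the witness is a sum of \emph{compactly supported} pieces $\ket{\psi_i}$ separated by gaps of width $\ell$, so after the verifier's $\ell$ increments the images $\Pi_{\mathrm{acc}}V\ket{0}_A\ket{\psi_i}$ are pairwise orthogonal and the acceptance probability decomposes as $\sum_i\alpha_i\norm{\xi_i}^2$ with no cross terms. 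An averaging step replaces this by $\alpha_0\norm{\xi_0}^2+(1-\alpha_0)\max_{i\ge1}\norm{\xi_i}^2$, and only now does shift-invariance do its work: the maximising piece (which has polynomial width by construction) is translated down next to $\ket{\psi_0}$, yielding a witness supported in $[0,2^m-1]$ with at least the same acceptance probability. Soundness is then automatic (restricting the witness space can only help), and the truncated verifier is an honest finite circuit. Your outline would become a proof once you insert the gap-creation/pigeonhole step and replace the two-part split by this infinite decomposition into bounded-width pieces; without it, the bulk witness need not be compressible to polynomial size.
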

\noindent So adding an infinite counter does not make $\sf QMA$ more powerful, but it does allow us to prove perfect completeness.
In other words, with an appropriate gate set, which happens to act on an infinite dimensional system but otherwise does not change the computational power of ${\sf QMA}$, the class ${\sf QMA}$ has perfect completeness, giving a partial resolution of the question whether ${\sf QMA}$ equals ${\sf QMA}_1$.

An infinite-dimensional Hilbert space of the form we allow is physically motivated, as it correctly describes certain physical systems, such as bosonic systems. The study of complexity classes analogous to $\sf BQP$ and $\sf QMA$ in bosonic systems has only recently been initiated~\cite{chabaud2025bosonic,upreti2025bounding}, and their relationship with standard classes remains unknown. We discuss this more in \Cref{sec:bosonic quantum computers}.

\paragraph{Doubly-exponential Completeness-amplification.} We can get back to the standard finite definition of ${\sf QMA}$ by truncating our construction, also truncating the witness to a finite dimensional subspace. This results in imperfect completeness. However, by truncating at large enough $d$, the completeness error $1-c$ can be made \emph{doubly-exponentially} small, with a verifier that is still polynomial (in the standard model, with no infinite-dimensional operations) improving the previously known result ${\sf QMA}={\sf QMA}(1-2^{-r},2^{-r})$ for any polynomial $r$, to
the following (proven in \Cref{sec:amplification}):
\begin{theorem}\label{thm:QMA-small-completeness-error}
    For any polynomial $r$, ${\sf QMA}={\sf QMA}(1-2^{-2^r},2^{-r})$.
\end{theorem}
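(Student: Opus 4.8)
The plan is to derive Theorem~\ref{thm:QMA-small-completeness-error} by truncating the infinite-counter construction that witnesses ${\sf QMA}\subseteq{\sf QMA}_1^\infty$. Starting from an arbitrary language in ${\sf QMA}$, first I would apply standard amplification to obtain a verifier $V$ with completeness $c=1-2^{-r}$ and soundness $s=2^{-r}$ (or whatever parameters the earlier construction wants as input), at polynomial cost. Then, invoking the mechanism behind \Cref{cor:QMAinQMA-inf-exp-soundness}, I would build the perfect-completeness verifier $V^\infty$ acting on the witness register together with an infinite counter register $\mathrm{span}\{\ket d: d\in\mathbb N\}$. The key structural fact I expect from that construction is that on a yes-instance there is a witness (possibly an infinite-dimensional one) that is accepted with probability exactly $1$, and moreover the accepting computation "spreads" over the counter with amplitude on $\ket d$ decaying geometrically in $d$ — the counter behaves like a program counter in a Las~Vegas algorithm, incremented on each failed attempt, so the probability that the accepting branch ever reaches counter value $\geq D$ is at most something like $s^{\Omega(D)}$ (a geometric tail controlled by the soundness parameter of $V$).

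Next I would define the finite verifier $V_D$ by simply replacing the infinite counter by a $(D{+}1)$-dimensional register $\mathrm{span}\{\ket 0,\dots,\ket D\}$ and letting the increment isometry act as the truncated shift (either halting, rejecting, or doing nothing when it would overflow past $D$ — whichever the construction makes cleanest). Since the increment and the zero-test are the only infinite operations used, and $V^\infty$ otherwise uses only $O(1)$ calls to $V$ and $V^{-1}$ plus $O(\log D)$ auxiliary gates (arithmetic on the counter), $V_D$ is a legitimate standard ${\sf QMA}$ verifier with gate complexity ${\sf poly}(n)+O(\log D)$. Soundness is inherited essentially unchanged: truncating the counter cannot increase acceptance probability on no-instances beyond the $2^{-r}$ bound, since the truncated shift is a sub-isometry and the acceptance predicate is a projector (I would argue this via a short norm/monotonicity argument, or by noting the truncated dynamics is a restriction of the infinite one composed with a projection). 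For completeness, I would take the infinite accepting witness on a yes-instance, project it onto the first $D{+}1$ counter levels and renormalize; by the geometric tail bound the discarded weight is $\leq 2^{-\Omega(D)}$, so the truncated witness is accepted with probability $\geq 1-2^{-\Omega(D)}$ — here I would use that $V^\infty$ restricted to counter values $\leq D$ agrees with $V_D$ up to the overflow boundary, whose contribution is exactly the tail we bounded.

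Finally I would choose the truncation parameter. Taking $D=2^{r}$ (for the given polynomial $r$) gives completeness $1-2^{-2^{\Omega(r)}}$, i.e.\ doubly-exponentially close to $1$, while the verifier's gate count becomes ${\sf poly}(n)+O(\log D)=\mathsf{poly}(n)+O(r)={\sf poly}(n)$, so $V_{2^r}$ is still a polynomial-time verifier; adjusting the constant in the exponent (or re-amplifying $r\to\Theta(r)$ at the outset) yields exactly ${\sf QMA}(1-2^{-2^r},2^{-r})$. Combined with the trivial inclusion ${\sf QMA}(1-2^{-2^r},2^{-r})\subseteq{\sf QMA}$, this gives the claimed equality. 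The main obstacle I anticipate is making the geometric tail estimate rigorous: one must show that the accepting branch of $V^\infty$'s computation genuinely has exponentially small support at large counter values — this is where the Las~Vegas structure and the soundness parameter $s=2^{-r}$ of the underlying verifier must be used quantitatively — and then carefully track that truncation error through the (constant number of) applications of $V$, $V^{-1}$ and the increment, so that the error does not blow up by more than a polynomial factor before being absorbed into the doubly-exponentially small bound.
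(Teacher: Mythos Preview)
Your proposal is essentially correct and follows the paper's route: amplify, build the infinite-counter verifier, truncate the counter to $D$ levels, use the geometric tail on the accepting witness for completeness, and argue soundness is preserved under restriction. Two small corrections are worth flagging. First, the geometric decay of the witness amplitudes is governed by the \emph{completeness} (more precisely by the optimal acceptance probability $p$: in the paper $\gamma=1-\tfrac{1}{2p}$ or $\gamma=\tfrac{1-p}{p}$), not by the soundness $s$ of $V$; your attribution to $s$ happens to be numerically harmless only because you chose $1-c=s=2^{-r}$ after amplification. Second, the soundness-preservation is not really a ``sub-isometry'' argument: the paper replaces $\incr$ by increment mod $D$, a unitary, and then observes that with the high bits of the counter forced to $0$ as ancilla, the mod-$D$ increment never wraps, so the truncated verifier on this restricted witness space is \emph{identical} to $V^\infty$ --- your second parenthetical option (``the truncated dynamics is a restriction of the infinite one'') is the right one, and in fact the infinite-counter verifier already has soundness $\Theta(s)$ rather than $s$, so one must start from a slightly smaller $s$.
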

\noindent A simple corollary is the ${\sf QMA}$-hardness of a larger class of problems, such as $k$-SAT with doubly-exponentially small error.

An interesting feature of this result is that, as in the definition of ${\sf QMA}$, it is not clear that a completeness which is doubly-exponentially close to 1 is independent of the choice of gate set; using the Solovay-Kitaev theorem to convert gate sets potentially gives too large approximation errors.
Indeed, we prove \Cref{thm:QMA-small-completeness-error} with some constraints on the choice of gate set.
This dependence is, however, relatively mild, potentially shedding light on the gate-set dependence of the definition of ${\sf QMA}_1$ as well.
We discuss this more in \Cref{sec:amplification}. For a comparison between the resources used by our ${\sf QMA}$ amplifier, and existing ones, refer to~\Cref{tab:comparison}.

\paragraph{Techniques.} To show that ${\sf QMA}\subseteq {\sf QMA}_1^\infty$, we show how to ``purify'' any ${\sf QMA}$ verifier to get a verifier that has perfect completeness, but uses an infinite dimensional register, as described above. We give two similar, but slightly different constructions (\Cref{sec:first-construction} and \Cref{sec:second-construction}), both based on~\cite{belovs2024purifier}, which shows how to map any bounded-error quantum algorithm to a perfect \emph{transducer}, which is an abstraction of a quantum algorithm. A transducer, even a perfect one, when implemented as a quantum algorithm, will have bounded error. However, perfect transducers (or transducers that are sufficiently close to perfect) can be composed without incurring log factors, and then the composed transducer can be implemented as a bounded-error quantum algorithm, leading to composition of bounded-error quantum algorithms without log factors.
Such a purifier already appeared in~\cite{belovs2024transducers}, but the construction in~\cite{belovs2024purifier} is simpler and more efficient, lending it well to the new setting of \emph{purifying $\sf QMA$ witnesses}.
In our constructions, the witness is infinite dimensional, but with exponentially decreasing amplitudes, conceptually not unlike a history state of a Las Vegas algorithm that has non-zero, but exponentially decreasing probability of making it to any time step $d\in\mathbb{N}$.
We believe this technique may have further applications in quantum complexity theory.

\section{Preliminaries}\label{sec:prelim}

We let upper case letters, such as $A$, denote quantum registers, which are associated with some Hilbert space $\HH_A$. We use the shorthand $\ket{\psi}\in A$ to indicate that $\ket{\psi}\in \HH_A$ is a normalized quantum state. We assume every Hilbert space contains some ``trivial'' state $\ket{0}$.

A (promise) problem is a pair of disjoint subsets $(L_{\text{yes}},L_{\text{no}})$ of $\{0,1\}^*$. The strings in $L_{\text{yes}}$ are the ``yes instances'', and the strings in $L_{\text{no}}$ are the ``no instances''. Some strings might be neither (which is what makes $L$ a promise problem).

A \emph{uniformly-generated polynomial-time quantum circuit} is a mapping from strings $x$ to quantum circuits $V(x)$, specified by a list of gates $U_1,\dots,U_t$, such that there is a polynomial-time Turing machine that outputs $V(x)$ on input $x$. Note that this implies that $t$ is polynomial in the length of $x$.

A \emph{verifier} is a circuit $V$ that takes as input a \emph{witness system} $W$, that will contain the witness, and an \emph{ancilla system} $A$ that will be initialized in a fixed state $\ket{0}$.
The output of $V$ consists of the first qubit, which we denote by $M$, and which is measured in the standard basis, and a remainder we denote by $N$.
We say the circuit \emph{accepts} (outputs ``yes'') if we measure a 1 on $M$, and otherwise, we say it \emph{rejects} (outputs ``no''). We let $\Pi_{\text{acc}}=\ket{1}\bra{1}_M \otimes I_N$ be the orthogonal projector onto states with $\ket{1}$ in the first qubit. This situation is summarized in the following diagram:
\begin{equation}\label{eq:verification circuit visual}
    \begin{quantikz}
        \lstick{$A$} & \gate[2]{V} & \qw["$M$"{above,pos=0.2}]{} & \meter{\Pi_{\accept}}\\
        \lstick{$W$} &  & \qw \rstick{$N$}
    \end{quantikz}
\end{equation}
Given a witness state $\ket{w}$, we denote by
\begin{align*}
    p_{\accept}(w) = \norm{\Pi_{\mathrm{acc}}V(x)\ket{0}_A\ket{w}_W}^2
\end{align*}
the acceptance probability when using the witness $\ket{w}$.
We now formally define the complexity class ${\sf QMA}$.

\begin{definition}[${\sf QMA}$ and ${\sf QMA}_1$]\label{def:QMA}
    For $s,c\in [0,1]$ such that $c>s$, ${\sf QMA}(c,s)$ is the set of promise problems $L=(L_{\mathrm{yes}},L_{\mathrm{no}})$ such that there exists a uniformly-generated polynomial-time quantum circuit $V(x)$ on registers $A$ and $W$ satisfying:
    \begin{description}
        \item[Completeness:] If $x\in L_{\mathrm{yes}}$, there exists a state $\ket{w} \in W$, called a \emph{witness}, such that
              $$p_{\accept}(w) \geq c$$ (i.e. the verifier accepts with probability at least $c$).
        \item[Soundness:] If $x\in L_{\mathrm{no}}$, for all states
              $\ket{w}\in W$,
              $$p_{\accept}(w) \leq s$$
              (i.e. no witness can convince the verifier to accept with probability more than $s$).
    \end{description}
    The class ${\sf QMA}$ is defined to be ${\sf QMA}(2/3,1/3)$, and the class ${\sf QMA}_1$ is defined to be ${\sf QMA}(1,1/3)$.
\end{definition}

It is known that for any polynomial $r$~\cite{kitaev2002classical,marriott2005QAMGames},
$${\sf QMA}={\sf QMA}\left(1-{2^{-r}}, {2^{-r}}\right)={\sf QMA}\left(\frac{1}{2}+\frac{1}{r}, \frac{1}{2}-\frac{1}{r}\right)
    \quad\mbox{and}\quad
    {\sf QMA}_1=%{\sf QMA}(1,2^{-r})=
    {\sf QMA}\left(1,1-\frac{1}{r}\right).$$

In \Cref{def:QMA}, a subtlety is that the definition of ${\sf QMA}_1$ in principle also depends on the choice of (universal) gate set used by $V$. We suppress this dependence in the notation.

It is necessary to allow the verifier in \Cref{def:QMA} an ancillary register $A$ in addition to its witness register $W$ -- otherwise, there always exists a witness $V(x)^\dagger\ket{1}\ket{\phi}$ (for any $\ket{\phi}\in N$) that is accepted with probability 1.

The verification procedure defines a POVM, where the element corresponding to acceptance is is given by
\begin{align}\label{eq:PV}
    P_V = (\bra{0}_A \ot I_W) V(x)^\dagger \Pi_{\accept} V(x) (\ket{0}_A \ot I_W),
\end{align}
which is such that $p_{\accept}(w) = \Tr[P_V \ket{w}\bra{w}]$.
We write $\Pi_0 = \ket{0}\bra{0}_A \ot I_W$, which projects the ancilla system onto $\ket{0}_A$.
A useful fact, which is the basis of the quantum rewinding technique of \cite{marriott2005QAMGames} for ${\sf QMA}$ amplification is the following.

\begin{lemma}\label{lem:rewinding}
    Let $\ket{w}$ be an eigenvector of $P_V$ with eigenvalue $p$ (so $p_{\accept}(w) = p$).
    Then there exist states
    \begin{align*}
        \begin{array}{ll}
            \ket{w_0}  = \ket{0}_A \ket{w}_W,                         & \ket{w_1}_{AW},                      \\
            \ket{s_0} = \ket{0}_M \ket{\phi_0}_N, \mbox{ and } \qquad & \ket{s_1} = \ket{1}_M \ket{\phi_1}_N
        \end{array}
    \end{align*}
    where $\Pi_0 \ket{w_1} = 0$, which are such that for $V = V(x)$
    \begin{align*}
        V \ket{w_0} & = \sqrt{1 - p} \ket{s_0} + \sqrt{p} \ket{s_1} \\
        V \ket{w_1} & = \sqrt{p}\ket{s_0} - \sqrt{1 - p} \ket{s_1}.
    \end{align*}
\end{lemma}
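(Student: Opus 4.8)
The plan is to work inside the two-dimensional invariant subspace that $V = V(x)$ and its building blocks naturally preserve. Since $\ket{w}$ is an eigenvector of $P_V = \Pi_0 V^\dagger \Pi_{\accept} V \Pi_0$ restricted to the witness register (or more precisely, $\ket{w_0} = \ket{0}_A\ket{w}_W$ satisfies $\Pi_0 V^\dagger \Pi_{\accept} V \ket{w_0} = p\ket{w_0}$), I would first define $\ket{s_1}$ to be the normalized projection of $V\ket{w_0}$ onto the accepting subspace: $\Pi_{\accept} V \ket{w_0} = \sqrt{p}\,\ket{s_1}$, which has the form $\ket{1}_M\ket{\phi_1}_N$ by definition of $\Pi_{\accept}$. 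Likewise define $\ket{s_0}$ via $(I - \Pi_{\accept}) V\ket{w_0} = \sqrt{1-p}\,\ket{s_0}$, so $\ket{s_0} = \ket{0}_M\ket{\phi_0}_N$, giving the first displayed equation $V\ket{w_0} = \sqrt{1-p}\,\ket{s_0} + \sqrt{p}\,\ket{s_1}$ essentially by construction (one checks $\|V\ket{w_0}\| = 1$ forces consistency of the normalizations, and the case $p\in\{0,1\}$ is handled by choosing the degenerate vectors arbitrarily subject to the orthogonality constraints).

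**Constructing $\ket{w_1}$.** The second displayed equation is the content that requires work. The natural candidate is $\ket{w_1} = V^\dagger\bigl(\sqrt{p}\,\ket{s_0} - \sqrt{1-p}\,\ket{s_1}\bigr)$, which is automatically a unit vector since $V$ is unitary and $\ket{s_0}\perp\ket{s_1}$ (they differ on register $M$). What must be verified is that $\Pi_0\ket{w_1} = 0$, i.e.\ the ancilla component of $\ket{w_1}$ is orthogonal to $\ket{0}_A$. To see this, I would compute $\Pi_0 V^\dagger \ket{s_0}$ and $\Pi_0 V^\dagger\ket{s_1}$ separately. From the eigenvector relation, $\Pi_0 V^\dagger \Pi_{\accept} V \ket{w_0} = p\ket{w_0}$, and since $\Pi_{\accept} V\ket{w_0} = \sqrt{p}\ket{s_1}$, this gives $\Pi_0 V^\dagger \ket{s_1} = \sqrt{p}\,\ket{w_0}$. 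Similarly, applying $\Pi_0 V^\dagger$ to the full relation $V\ket{w_0} = \sqrt{1-p}\ket{s_0} + \sqrt{p}\ket{s_1}$ gives $\ket{w_0} = \sqrt{1-p}\,\Pi_0 V^\dagger\ket{s_0} + \sqrt{p}\,\Pi_0 V^\dagger\ket{s_1} = \sqrt{1-p}\,\Pi_0 V^\dagger \ket{s_0} + p\,\ket{w_0}$, hence $\Pi_0 V^\dagger\ket{s_0} = \sqrt{1-p}\,\ket{w_0}$ (when $p\neq 1$; the $p=1$ case is separate). Therefore $\Pi_0\ket{w_1} = \sqrt{p}\cdot\sqrt{1-p}\,\ket{w_0} - \sqrt{1-p}\cdot\sqrt{p}\,\ket{w_0} = 0$, as desired.

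**Wrapping up and the main obstacle.** With $\Pi_0\ket{w_1} = 0$ established, I would note that $\ket{w_1}$ as defined is the unique unit vector (up to phase) in the span of $\{V^\dagger\ket{s_0}, V^\dagger\ket{s_1}\}$ that is orthogonal to $\ket{w_0}$ and satisfies the displayed formula, so all claimed relations hold. The main obstacle is really just the bookkeeping around the degenerate cases $p = 0$ and $p = 1$, where the division implicit in "normalize $\Pi_{\accept} V\ket{w_0}$" breaks down and one of $\ket{s_0}, \ket{s_1}$ must be chosen by hand — one must check that an appropriate choice still exists with the right register structure ($\ket{0}_M\ket{\phi_0}_N$ resp.\ $\ket{1}_M\ket{\phi_1}_N$) and still makes $\Pi_0\ket{w_1} = 0$. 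This is where a little care is needed, but it is routine: one picks any unit vector of the correct $M$-register form in the appropriate subspace, and the orthogonality $\ket{s_0}\perp\ket{s_1}$ together with unitarity of $V$ does the rest. The algebra in the generic case is a short computation essentially identical to the $2\times 2$ block analysis underlying the Marriott–Watrous rewinding lemma.
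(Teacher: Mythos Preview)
Your proof is correct and follows essentially the same Marriott--Watrous two-dimensional subspace analysis as the paper. The only (cosmetic) difference is dual packaging: you define $\ket{w_1} := V^\dagger(\sqrt{p}\,\ket{s_0} - \sqrt{1-p}\,\ket{s_1})$ so that the $V$-action and normalization are automatic and $\Pi_0\ket{w_1}=0$ is the thing to verify, whereas the paper defines $\ket{w_1} := -\frac{1}{\sqrt{1-p}}(I-\Pi_0)V^\dagger\ket{s_1}$ so that $\Pi_0\ket{w_1}=0$ is automatic and the norm and $V$-action must be computed.
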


In other words, \Cref{lem:rewinding} identifies two two-dimensional subspaces, such that $V$, with respect to the bases $\{\ket{w_0}, \ket{w_1}\}$ and $\{\ket{s_0}, \ket{s_1}\}$, acts as
\begin{align*}
    V = \bordermatrix{ & \ket{w_0}  & \ket{w_1} \cr
    \ket{s_0}          & \sqrt{1-p} & \sqrt{p} \cr
    \ket{s_1}          & \sqrt{p}   & -\sqrt{1-p} }.
\end{align*}
This applies in particular to the witness with \emph{maximal} acceptance probability, which is an eigenvector of $P_V$.

\begin{proof}
    For completeness, we translate the proof from~\cite{marriott2005QAMGames} to our notation.
    Since the acceptance probability is $\bra{w_0}V^\dagger\Pi_{\text{acc}}V\ket{w_0}=p$, we have $V\ket{w_0}=\sqrt{1-p}\ket{s_0}+\sqrt{p}\ket{s_1}$ for some states of the form $\ket{s_0}=\ket{0}\ket{\phi_0}$ and $\ket{s_1}=\ket{1}\ket{\phi_1}$.
    Define
    $$\ket{w_1}:=-\frac{1}{\sqrt{1-p}}(I-\Pi_0)V^\dagger\ket{s_1}=-\frac{1}{\sqrt{p(1-p)}}(I-\Pi_0)V^\dagger \Pi_{\text{acc}}V\ket{w_0},$$
    which clearly satisfies $\Pi_0\ket{w_1}=0$. Moreover, using $\Pi_0V^\dagger\Pi_{\text{acc}}V\ket{w_0}=p\ket{w_0}$, we have
    \begin{align*}
        \norm{\ket{w_1}}^2 & = \frac{1-\norm{\Pi_0V^\dagger\frac{1}{\sqrt{p}}\Pi_{\text{acc}}V\ket{w_0}}^2}{1-p}=\frac{1-\frac{p^2}{p}}{1-p}=1.
    \end{align*}
    We have:
    \begin{align*}
        \bra{s_1}V\ket{w_1} & = -\frac{\bra{w_0}V^\dagger \Pi_{\text{acc}}}{\sqrt{p}}V\frac{(I-\Pi_0)V^\dagger \Pi_{\text{acc}}V\ket{w_0}}{\sqrt{p(1-p)}} \\
                            & =-\frac{\bra{w_0}V^\dagger\Pi_{\text{acc}}V\ket{w_0}-p\bra{w_0}V^\dagger\Pi_{\text{acc}}V\ket{w_0}}{p\sqrt{1-p}}
        = -\frac{p-p^2}{p\sqrt{1-p}}=-\sqrt{1-p},
    \end{align*}
    and similarly, using $\bra{w_0}(I-\Pi_0)=0$, we have
    \begin{align*}
        \bra{s_0}V\ket{w_1} & = -\frac{\bra{w_0}V^\dagger(I- \Pi_{\text{acc}})}{\sqrt{1-p}}V\frac{(I-\Pi_0)V^\dagger \Pi_{\text{acc}}V\ket{w_0}}{\sqrt{p(1-p)}} \\
                            & =-\frac{\bra{w_0}V^\dagger V(I-\Pi_0)V^\dagger\Pi_{\text{acc}}V\ket{w_0}-(p-p^2)}{(1-p)\sqrt{p}}
        = \frac{p-p^2}{(1-p)\sqrt{p}}=\sqrt{p},
    \end{align*}
    establishing that $V\ket{w_1}=\sqrt{p}\ket{s_0}-\sqrt{1-p}\ket{s_1}$, since $V\ket{w_1}$ is a unit vector.
\end{proof}

\subsection{Bounding errors}
We measure distances between quantum states using the trace distance, defined as
\begin{align*}
    T(\rho,\sigma) = \frac12 \norm{\rho - \sigma}_1
\end{align*}
where $\norm{M}_1 = \Tr[\sqrt{M^\dagger M}]$ is the Schatten 1-norm.
For pure states, $T(\phi,\psi)^2 = 1 - \abs{\braket{\phi}{\psi}}^2$.
We will use the following simple lemma.

\begin{lemma}\label{lem:removing controlled gate}
    Suppose
    %\begin{align*}
    $\ket{\psi} = \sqrt{1 - p} \ket{0}\ket{\phi_0} + \sqrt{p} \ket{1}\ket{\phi_1}$
    %\end{align*}
    is a quantum state on a qubit and an additional register $A$, $U$ is a unitary on the register $A$ and
    \begin{align*}
        C[U] := \proj{0} \ot I + \proj{1} \ot U
    \end{align*}
    is a controlled version of $U$.
    Then if $\ket{\phi} = C[U] \ket{\psi}$, we have
    \begin{align*}
        T(\phi, \psi) \leq 2\sqrt{p(1-p)}.
    \end{align*}
\end{lemma}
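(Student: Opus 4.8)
The plan is a short explicit computation followed by an overlap estimate. Since $C[U]$ acts as the identity on the $\ket{0}$-branch of the control qubit and as $U$ on the $\ket{1}$-branch, I would first write out
\[
  \ket{\phi} \;=\; C[U]\ket{\psi} \;=\; \sqrt{1-p}\,\ket{0}\ket{\phi_0} \;+\; \sqrt{p}\,\ket{1}\,U\ket{\phi_1}.
\]
Both $\ket{\phi}$ and $\ket{\psi}$ are pure states, so by the identity $T(\phi,\psi)^2 = 1-\abs{\braket{\phi}{\psi}}^2$ recorded just before the lemma, everything reduces to lower-bounding the overlap $\abs{\braket{\psi}{\phi}}$.

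Second, I would evaluate the overlap term by term. The two $\ket{0}$-branches are identical, the $\ket{0}$- and $\ket{1}$-branches are orthogonal (different control-qubit values), and the $\ket{1}$-branches contribute $p\,\bra{\phi_1}U\ket{\phi_1}$, so
\[
  \braket{\psi}{\phi} \;=\; (1-p) \;+\; p\,\bra{\phi_1}U\ket{\phi_1}.
\]
Taking real parts and using $\abs{\bra{\phi_1}U\ket{\phi_1}} \le \norm{U\ket{\phi_1}}\,\norm{\ket{\phi_1}} = 1$ gives $\operatorname{Re}\braket{\psi}{\phi} \ge (1-p)-p = 1-2p$. For $p \le \tfrac12$ this lower bound is nonnegative, hence $\abs{\braket{\psi}{\phi}}^2 \ge (1-2p)^2$, and therefore
\[
  T(\phi,\psi)^2 \;=\; 1 - \abs{\braket{\psi}{\phi}}^2 \;\le\; 1-(1-2p)^2 \;=\; 4p(1-p),
\]
which is the claimed bound after taking square roots.

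The main thing to get right is the factor $\sqrt{1-p}$: the triangle inequality by itself only yields $\norm{\ket{\phi}-\ket{\psi}} = \sqrt{p}\,\norm{(U-I)\ket{\phi_1}} \le 2\sqrt{p}$, hence (since $T(\phi,\psi)\le\norm{\ket{\phi}-\ket{\psi}}$ for pure states) the weaker bound $T(\phi,\psi)\le 2\sqrt{p}$; recovering the extra $\sqrt{1-p}$ requires keeping the exact overlap $(1-p)+p\,\bra{\phi_1}U\ket{\phi_1}$ rather than discarding phase information. That same step is what forces $p \le \tfrac12$, which is harmless for the intended applications, where the $\ket{1}$-branch carries exponentially small weight. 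The remaining steps are routine inner-product calculations.
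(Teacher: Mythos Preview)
Your argument is correct and essentially identical to the paper's: both compute $\braket{\psi}{\phi}=(1-p)+p\bra{\phi_1}U\ket{\phi_1}$, lower-bound its modulus by $1-2p$, and plug into $T(\phi,\psi)=\sqrt{1-\abs{\braket{\psi}{\phi}}^2}$. You are in fact slightly more careful than the paper in flagging the implicit assumption $p\le\tfrac12$ (without which the step $\abs{\braket{\psi}{\phi}}\ge 1-2p\Rightarrow\abs{\braket{\psi}{\phi}}^2\ge(1-2p)^2$ fails, and indeed the lemma is false at $p=1$); as you note, this is harmless in all the applications.
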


\begin{proof}
    We bound the overlap
    \begin{align*}
        \abs{\braket{\phi}{\psi}} = \abs{1 - p + p \bra{\phi_1} U \ket{\phi_1}} \geq 1 - 2p
    \end{align*}
    and hence
    \begin{align*}
        T(\phi,\psi) & = \sqrt{1 - \abs{\braket{\phi}{\psi}}^2} \leq \sqrt{1 - (1 - 2p)^2} = 2\sqrt{p(1-p)}. \qedhere
    \end{align*}
\end{proof}

\section{\texorpdfstring{$\sf QMA$}{QMA} with an infinite counter}\label{sec:infinite operations}
We now define a variant of ${\sf QMA}$, which we call ${\sf QMA}^{\infty}$, where the prover and verifier are allowed to use a quantum register representing an ``infinite counter''.
It has a basis of states $\ket{d}$ for $d \in \mathbb{N}$ and the verifier can apply (controlled) increments and check whether the register is in the state $\ket{0}$. %, which we can think of as analogous to the position of the `tape head' in a Turing machine.
Clearly, when allowing infinite-dimensional systems in quantum complexity classes, one has to be careful what operations are allowed, since this could significantly impact the computational power.
We will shortly give a precise definition of ${\sf QMA}^{\infty}$. In \Cref{sec:proof truncating witness}, we will prove that in fact ${\sf QMA}^{\infty} = {\sf QMA}$ (\Cref{thm:QMA-inf}), so allowing the prover and verifier an infinite counter is not especially powerful.
What makes this addition interesting, however, is that one does get perfect completeness: ${\sf QMA}^{\infty} = {\sf QMA}_1^{\infty}$ (\Cref{thm:QMAisQMA1-inf}), as we will prove in the next section.

Consider a universal gate set $\mathcal C_n$ on $n$ qubits.
We now define a new gate set $\mathcal C_n^{\infty}$ which acts on $n$ qubits, as well as on one additional infinite dimensional quantum system $B$ which has a basis $\ket{d}$ for $d \in \mathbb{N}$.
The new gate set $\mathcal C_n^{\infty}$ consists of the gates in $\mathcal C_n$ (on the original $n$ qubits), as well as the following operations:
\begin{description}
    \item[Controlled increments:] We let $\incr$ denote the isometry on $B$ that maps $\ket{d} \mapsto \ket{d+1}$. We allow the controlled increment operation between qubit $i$ and $B$:
          \begin{align*}
              \proj{0}_i \ot I_B + \proj{1}_i \ot \incr_B.
          \end{align*}
          %              (and acts as the identity on the other qubits).
    \item[Checking the zero state:]
          We allow a gate that checks if $\ket{d}_B=\ket{0}_B$, by applying a Pauli $X$ to any qubit $i$, conditioned on $\ket{0}_B:$
          \begin{align*}
              X_i \ot \proj{0}_B + I \ot \left(\sum_{d \geq 1} \proj{d}_B \right)
          \end{align*}
          where $X_i$ denotes the Pauli $X$ applied to the $i$-th qubit. Note that this allows us to implement any polynomial-sized circuit in ${\cal C}_n$ controlled on whether $d>0$ or $d=0$: we simply apply a conditional Pauli $X$ to a qubit register initialized to $\ket{0}$, and then implement the desired circuit controlled on this register.
\end{description}

Now ${\sf QMA}^\infty(c,s)$ is defined as in \Cref{def:QMA}, but the verifier $V$ can use any polynomial number of gates from the set $\mathcal C_n^{\infty}$,\footnote{In fact, the ${\sf QMA}^\infty$ verifiers we construct need something potentially weaker than access to these two operations: a single call to an operation that does a controlled-increment, and then immediately measures whether $\ket{d}_B=\ket{0}_B$.
} and in particular, the witness system may include the register $B$. Let ${\sf QMA}_1^\infty={\sf QMA}^\infty(1,1/3)$, and ${\sf QMA}^\infty = {\sf QMA}^\infty(2/3,1/3)$. Again, ${\sf QMA}_1^{\infty}$ in principle also depends on the gate set $\mathcal C_n$; we will show later that this gate set dependence is very mild.

We note that if one considers polynomial quantum computations with an initial state where all qubits as well as the infinite counter are in the $\ket{0}$ state, leading to a class ${\sf BQP}^\infty$, it is immediate that the computation can be truncated to a polynomial number of qubits. Indeed, the number of times $\ell$ that a (controlled) increment is applied is at most polynomial, and the counter $B$ will never reach states $\ket{d}$ for $d > \ell$, so $B$ can be truncated to $\log(\ell)$ qubits. That is, ${\sf BQP}^\infty$ is directly seen to equal ${\sf BQP}$. This is even true of $\sf RQP$ and $\sf EQP$, the one-sided and exact variants of ${\sf BQP}$.
In \Cref{sec:proof truncating witness}, we show that for ${\sf QMA}^\infty$ we can similarly truncate the infinite counter: ${\sf QMA}={\sf QMA}^\infty$ (\Cref{thm:QMA-inf}).
This result is not immediately obvious, since the witness states can be supported on infinitely many computational basis states $\ket{d}_B$.
The main ingredient is to prove that any witness state can be replaced by a witness state on a low-dimensional subspace without changing the acceptance probability too much.
From \Cref{thm:QMA-inf} we conclude that it not unreasonable to allow access to infinite dimensional registers with appropriately defined operations for proof systems. On the other hand, we will also prove that ${\sf QMA}_1^\infty = {\sf QMA}^\infty = {\sf QMA}$, which is not known without an infinite counter.

A completely different motivation for considering quantum complexity classes with infinite dimensional registers is that many quantum mechanical systems, in particular bosonic systems, are modeled to be inherently infinite dimensional. We comment on this connection in \Cref{sec:bosonic quantum computers}.

The definition we give of $\sf{QMA }^{\infty}$ is a minimal one. We could also have defined ${\sf QMA}^\infty$ with multiple counters or with an invertible two-sided infinite counter; this also does not increase the computational power of ${\sf QMA}^{\infty}$. We discuss this in \Cref{rmk:extensions}.

A disadvantage of our minimal definition of ${\sf QMA}_1^\infty$ is that it is not immediately clear how to do amplification (error reduction). We cannot directly apply rewinding as in~\cite{marriott2005QAMGames}, since the increment operation is not invertible, and we cannot use parallel repetition, since there is only a single infinite counter register.

Thus, it is not immediately clear that ${\sf QMA}^\infty(1,s)={\sf QMA}_1^\infty$ for $s\neq 1/3$, even for constant $s$.
However, our results nonetheless imply the following.
\begin{theorem}\label{thm:QMA-inf-with-different-c-s}
    For any polynomial $r$,
    $${\sf QMA}_1^\infty={\sf QMA}^\infty\left(1,1-\frac{1}{r}\right)={\sf QMA}^\infty(1,2^{-r})$$
    $$\mbox{and}\quad{\sf QMA}^\infty={\sf QMA}^\infty(1-2^{-r},2^{-r})={\sf QMA}^\infty\left(\frac{1}{2}+\frac{1}{r},\frac{1}{2}-\frac{1}{r}\right).$$
\end{theorem}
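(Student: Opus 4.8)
The plan is to derive all four equalities as consequences of the two main theorems (\Cref{thm:QMAisQMA1-inf} and \Cref{thm:QMA-inf}) together with the classical $\sf QMA$-amplification facts recalled after \Cref{def:QMA}, rather than by attempting to amplify an $\sf{QMA}^\infty$ verifier directly. The obstacle noted in the text — that we cannot rewind the non-invertible increment, nor run parallel copies with a single infinite counter — is exactly why we route everything through ordinary $\sf QMA$, where amplification is classical.

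First I would establish the chain of inclusions that ties the infinite-counter classes to $\sf QMA$. On one hand, for any $c,s$ we trivially have ${\sf QMA}(c,s)\subseteq{\sf QMA}^\infty(c,s)$, since an $\sf QMA$ verifier is an $\sf{QMA}^\infty$ verifier that never touches $B$; and by \Cref{thm:QMA-inf} we know the two-sided versions collapse, ${\sf QMA}^\infty={\sf QMA}^\infty(2/3,1/3)={\sf QMA}$. On the other hand, \Cref{thm:QMAisQMA1-inf} gives ${\sf QMA}={\sf QMA}_1^\infty={\sf QMA}^\infty(1,1/3)$. The key point is that the construction behind ${\sf QMA}\subseteq{\sf QMA}_1^\infty$ (referenced as \Cref{cor:QMAinQMA-inf-exp-soundness}) in fact takes an $\sf QMA$ verifier with soundness $s$ and produces an $\sf{QMA}^\infty$ verifier with \emph{perfect completeness} and soundness controlled by $s$; applying it to an $\sf QMA$ verifier pre-amplified (classically, in the standard model) to soundness $2^{-r}$ yields ${\sf QMA}\subseteq{\sf QMA}^\infty(1,2^{-r})$ for any polynomial $r$, and similarly to soundness $1-1/r$ yields ${\sf QMA}\subseteq{\sf QMA}^\infty(1,1-1/r)$.

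Now I assemble the four equalities. For the first line: ${\sf QMA}^\infty(1,2^{-r})\subseteq{\sf QMA}^\infty(1,1-1/r)\subseteq{\sf QMA}^\infty(1,1/3)={\sf QMA}_1^\infty$ is monotonicity in the soundness parameter together with the definition (one may need $r\ge 2$ or $r\ge 3$, which is harmless for a polynomial bound), while ${\sf QMA}_1^\infty={\sf QMA}\subseteq{\sf QMA}^\infty(1,2^{-r})$ was just shown; combined with ${\sf QMA}^\infty(1,1-1/r)\subseteq{\sf QMA}_1^\infty$ this closes the loop and gives
\begin{align*}
    {\sf QMA}_1^\infty={\sf QMA}^\infty\left(1,1-\tfrac{1}{r}\right)={\sf QMA}^\infty(1,2^{-r}).
\end{align*}
For the second line: ${\sf QMA}^\infty(1-2^{-r},2^{-r})$ and ${\sf QMA}^\infty(\tfrac12+\tfrac1r,\tfrac12-\tfrac1r)$ both sandwich between ${\sf QMA}^\infty(1,1/3)$-type classes from below and ${\sf QMA}^\infty={\sf QMA}$ from above: on the one hand ${\sf QMA}={\sf QMA}_1^\infty\subseteq{\sf QMA}^\infty(1-2^{-r},2^{-r})$ (perfect completeness is stronger than $1-2^{-r}$) and likewise for the $\tfrac12\pm\tfrac1r$ version; on the other hand ${\sf QMA}^\infty(1-2^{-r},2^{-r})\subseteq{\sf QMA}^\infty(2/3,1/3)={\sf QMA}$ by monotonicity (valid once $r$ is large enough, hence for all but finitely many values of the polynomial), and similarly ${\sf QMA}^\infty(\tfrac12+\tfrac1r,\tfrac12-\tfrac1r)\subseteq{\sf QMA}^\infty(2/3,1/3)={\sf QMA}$ using \Cref{thm:QMA-inf}. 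Thus both equal ${\sf QMA}={\sf QMA}^\infty$.

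The step I expect to need the most care is not a hard one but a bookkeeping one: checking that the $\sf{QMA}^\infty$-classes are genuinely monotone in $(c,s)$ under the infinite-counter gate set (so that a verifier with a better guarantee witnesses membership for weaker parameters), and that the reduction of \Cref{cor:QMAinQMA-inf-exp-soundness} can be fed a classically pre-amplified $\sf QMA$ verifier without the increment/zero-check operations interfering — i.e.\ that standard $\sf QMA$ amplification happens entirely on the $n$-qubit part and commutes with the purification construction on $B$. Everything else is a short diagram chase through the two main theorems.
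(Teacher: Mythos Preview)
Your overall strategy of routing through ${\sf QMA}$ is the same as the paper's, but there is a genuine gap in the ``first line'' argument: the inclusion ${\sf QMA}^\infty(1,1-\tfrac{1}{r})\subseteq{\sf QMA}^\infty(1,\tfrac13)$ does \emph{not} follow from monotonicity in the soundness parameter. Monotonicity goes the other way --- ${\sf QMA}^\infty(1,s_1)\subseteq{\sf QMA}^\infty(1,s_2)$ when $s_1\le s_2$ --- and for any $r\ge 2$ we have $1-\tfrac{1}{r}\ge\tfrac12>\tfrac13$, so the claimed containment fails. (Your parenthetical ``one may need $r\ge 2$ or $r\ge 3$'' cannot rescue this; making $r$ larger only pushes $1-\tfrac{1}{r}$ closer to~$1$.) This is precisely the step where the paper needs an extra idea: it modifies the ${\sf QMA}^\infty(1,1-\tfrac{1}{r})$ verifier to reject outright with probability $\tfrac12-\tfrac{1}{4r-2}$ and otherwise run $V$, yielding a ${\sf QMA}^\infty\bigl(\tfrac12+\tfrac{1}{4r-2},\tfrac12-\tfrac{1}{4r-2}\bigr)$ verifier, and \emph{then} applies the truncation result (\Cref{lem:truncate}/\Cref{cor:QMA1-inf-is-QMA-inf-is-QMA}) to land in ${\sf QMA}$.

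A related issue appears in your second line: the claimed inclusion ${\sf QMA}^\infty(\tfrac12+\tfrac{1}{r},\tfrac12-\tfrac{1}{r})\subseteq{\sf QMA}^\infty(\tfrac23,\tfrac13)$ by monotonicity only holds when $\tfrac{1}{r}\ge\tfrac16$, i.e.\ $r\le 6$. Invoking \Cref{thm:QMA-inf} as stated (${\sf QMA}^\infty(\tfrac23,\tfrac13)={\sf QMA}$) does not help for larger $r$; what is actually needed is the underlying truncation \Cref{lem:truncate}, which shows directly that any ${\sf QMA}^\infty(c,s)$ verifier with inverse-polynomial gap can be converted to a ${\sf QMA}(c-o(1),s)$ verifier. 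Once you invoke that lemma (as the paper does in \Cref{cor:QMA1-inf-is-QMA-inf-is-QMA}), the second line goes through; but the first line still requires the random-rejection trick above, which your proposal does not supply.
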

\noindent This follows from the fact that all classes mentioned in the theorem are in fact equal to ${\sf QMA}$, which is proven in \Cref{cor:QMA1-inf-is-QMA-inf-is-QMA}.

\section{Perfect completeness of \texorpdfstring{$\sf QMA^\infty$}{QMA with an infinite register}}\label{sec:main}

In this section, our goal is to prove that ${\sf QMA}\subseteq{\sf QMA}_1^\infty$.
To that end, we need to show that if we have a problem in ${\sf QMA}$ with verification circuit $V(x)$, we can use it to construct a ${\sf QMA}_1^\infty$ verification circuit $V'$ -- on the one hand, we have the stringent requirement that $V'$ have perfect completeness (while keeping soundness sufficiently small), on the other hand, it may use an infinite-dimensional register equipped with the operations described in~\Cref{sec:infinite operations}.

We present two constructions of this form. The first, in \Cref{sec:first-construction} has a conceptually simple analysis, and uses only one call to each of $V$ (the original verifier) and $V^\dagger$. It has the disadvantage that it requires a constant gap between the completeness and soundness of $V$, and results in a soundness $>1/2$, so it only implies the weaker statement ${\sf QMA}\subseteq{\sf QMA}^\infty(1,1/2+2^{-r})$ (recall that error reduction is not immediate in our infinite-dimensional model).
The second construction, in \Cref{sec:second-construction}, is similar in spirit, but works for any soundness-completeness gap, centered around $1/4$, increasing the soundness by at most a constant factor, and thus establishing ${\sf QMA}\subseteq {\sf QMA}_1^\infty$ (\Cref{cor:QMAinQMA-inf-exp-soundness}). Compared with the first construction, it uses one additional call to $V$.

\subsection{A first construction, using only 2 calls to \texorpdfstring{$V$ and $V^\dagger$}{the verifier}}\label{sec:first-construction}

Fix a ${\sf QMA}(c,s)$ verification circuit $V(x)$ for a language $L$, as in \cref{eq:verification circuit visual}, acting on a register $A$ initialized in the state $\ket{0}_A$ and a register $W$ containing the witness state.
Recall that $\Pi_{\accept}$ denotes the projection operator on the accepting states and $\Pi_0$ denotes the projection onto the subspace where the ancilla qubits of the verifier are in the state $\ket{0}_A$.
Consider the following procedure, which we will refer to as $V'$:
\begin{equation}\label{eq:modified-verifier-1}
    \begin{quantikz}
        \lstick{$B$} & \qw & \qw & \qw & \meter[2]{Q} \\
        \lstick{$\ket{0}_R$} & \qw & \targ{} & \qw &  & \cw\\
        \lstick{$\ket{0}_A$} & \gate[2]{V} & \ctrl{-1} & \gate[2]{V^{\dagger}} & \meter{\Pi_0} & \cw\\
        \lstick{$W$} &  & \qw & & \qw & \qw
    \end{quantikz}
\end{equation}
The circuit $V'$ takes a witness state on the tensor product of the original witness register $W$, and a single infinite counter $B$ with basis $\{\ket{d}:d\in\mathbb{N}\}$. Its ancilla consists of $A$, the ancilla of $V$, as well as a single-qubit register $R$.
The measurement $\{Q, I-Q\}$ on $BR$ is defined by letting $Q$, associated to measurement outcome $q = 1$, be the projection operator onto the span of the states
\begin{align}\label{eq:accepting measurement}
    \{\ket{0}\ket{1} \} \cup \{\ket{\Phi_k} := \frac{1}{\sqrt2}(\ket{d}\ket{0} + \ket{d+1}\ket{1})\}_{d=0}^{\infty}.
\end{align}

We also measure $\{\Pi_0, I - \Pi_0\}$ on the register $A$ with respective outcomes $a = 1, 0$.
The verifier $V'(x)$ accepts if $a = 0$ (so the ancilla is not in the state $\ket{0}_A$), or if $q = 1$.

The measurement $\{Q, I-Q\}$ on systems $B$ and $R$ can be implemented using the controlled increment operator $\incr$ and the zero-state measurement  on the infinite counter as defined in \Cref{sec:infinite operations}:
\begin{center}
    \begin{quantikz}
        \lstick{$B$} \qw & \meter[2]{Q} \\
        \lstick{$R$} \qw & & \cw
    \end{quantikz}
    \qquad by \qquad
    \begin{quantikz}
        \lstick{$B$} \qw & \gate{\circincr} & \qw & \meter{} & \cw \rstick{$k$}\\
        \lstick{$R$} \qw & \octrl{-1} & \gate{H} & \meter{} & \cw \rstick{$z$}
    \end{quantikz}
\end{center}
where the measurement $Q$ has outcome $q = 1$ if $k=0$ or $z=0$.
This means that $V'(x)$ is a legitimate verification procedure for the class ${\sf QMA}_1^\infty$.

\begin{proposition}\label{prop:perfect completeness}
    There exists a constant $s_0 > 0$, such that for any problem in ${\sf QMA}$ with verification circuits $V(x)$ with completeness $c > \frac12$ and soundness $s < s_0$, the verifier $V'(x)$ described above has perfect completeness $c' = 1$ and soundness $s' \leq \frac12 + 2\sqrt{s(1 - s)} < 1$.
\end{proposition}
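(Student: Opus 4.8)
The plan is to analyze the gadget $V'$ of \cref{eq:modified-verifier-1} by treating completeness and soundness separately, using \Cref{lem:rewinding} for the former and a ``delete the controlled gate'' argument based on \Cref{lem:removing controlled gate} for the latter. Write $C$ for the single controlled gate of $V'$ (an $X$ on $R$ controlled on $V$'s output qubit $M$). From a witness $\ket\xi_{BW}$ the circuit prepares $V^\dagger C V\,\ket0_A\ket0_R\ket\xi_{BW}$, measures $\{\Pi_0,I-\Pi_0\}$ on $A$ (outcome $a\in\{1,0\}$) and $\{Q,I-Q\}$ on $BR$ (outcome $q\in\{1,0\}$), and accepts unless $a=1$ and $q=0$. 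Since $V,C,V^\dagger$ act trivially on $B$, the counter is touched only by the last measurement, so I would first record the linear algebra of $Q$: with $\ket{\Psi_d}\defeq\tfrac1{\sqrt2}(\ket d\ket0-\ket{d+1}\ket1)$ one checks that $\{\ket0\ket1\}\cup\{\ket{\Phi_d}\}_{d\ge0}\cup\{\ket{\Psi_d}\}_{d\ge0}$ is an orthonormal basis of $\HH_B\ot\HH_R$, so that $Q=\ketbra{01}{01}+\sum_{d\ge0}\ketbra{\Phi_d}{\Phi_d}$ and $I-Q=\sum_{d\ge0}\ketbra{\Psi_d}{\Psi_d}$.

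For completeness, let $\ket w_W$ be the witness of $V(x)$ of maximal acceptance probability $p$; since the completeness parameter is $c>\tfrac12$ we have $p\ge c>\tfrac12$, and (as noted after \Cref{lem:rewinding}) $\ket w$ is an eigenvector of $P_V$, so \Cref{lem:rewinding} supplies $\ket{w_0}=\ket0_A\ket w_W$, $\ket{w_1}$ with $\Pi_0\ket{w_1}=0$, and $\ket{s_0},\ket{s_1}$ on which $V$ (and hence $V^\dagger$) acts by the stated symmetric involutory $2\times2$ matrix. I propose the ${\sf QMA}_1^\infty$-witness $\ket\xi\defeq\bigl(\sqrt{1-\lambda^2}\sum_{d\ge0}\lambda^d\ket d\bigr)_B\ot\ket w_W$ with $\lambda\defeq\tfrac{1-p}{p}\in[0,1)$, which is normalizable precisely because $p>\tfrac12$. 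Using that $C$ sends $\ket0_R\ket{s_b}\mapsto\ket b_R\ket{s_b}$ and then expanding $V^\dagger$, one finds $V^\dagger C V\,\ket0_A\ket0_R\ket w_W=\bigl((1-p)\ket0_R+p\ket1_R\bigr)\ket{w_0}+\sqrt{p(1-p)}\bigl(\ket0_R-\ket1_R\bigr)\ket{w_1}$, tensored with the fixed $B$-state. Measuring $\Pi_0$ on $A$: with probability $2p(1-p)$ we obtain the $\ket{w_1}$ branch, where $a=0$ forces acceptance; with probability $(1-p)^2+p^2$ we obtain the $\ket{w_0}$ branch, in which the residual $BR$-state is proportional to $\bigl(\sum_d\lambda^d\ket d\bigr)_B\ot\bigl((1-p)\ket0+p\ket1\bigr)_R$, whose inner product with every $\ket{\Psi_d}$ vanishes (it is a multiple of $(1-p)-\lambda p=0$), so this state lies in the range of $Q$ and $q=1$ is certain. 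Hence $c'=1$.

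For soundness assume $x\in L_{\mathrm{no}}$, so $P_V\preceq sI$, and fix an arbitrary $\ket\xi_{BW}=\sum_d\ket d_B\ket{\psi_d}_W$. The crux is to compare $V'$ with the circuit obtained by deleting $C$: there $V^\dagger V=I$, so the state before the measurements is exactly $\ket0_A\ket0_R\ket\xi_{BW}$, forcing $a=1$; and since the $BR$-marginal is $\sigma_B\ot\ketbra00_R$ with $\sigma_B=\Tr_W\ketbra\xi\xi$, the identities $\bra{01}(\sigma_B\ot\ketbra00_R)\ket{01}=0$ and $\bra{\Phi_d}(\sigma_B\ot\ketbra00_R)\ket{\Phi_d}=\tfrac12\bra d\sigma_B\ket d$ give $\Pr[q=1]=\tfrac12\Tr\sigma_B=\tfrac12$. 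So this modified verifier accepts with probability \emph{exactly} $\tfrac12$ on every witness. To bound the effect of reinstating $C$, write the post-$V$ state as $\sqrt{1-P}\ket0_M\ket{\chi_0}+\sqrt P\ket1_M\ket{\chi_1}$; here $P=\sum_d\norm{\Pi_{\accept}V\ket0_A\ket{\psi_d}_W}^2=\sum_d\bra{\psi_d}P_V\ket{\psi_d}\le s\sum_d\norm{\psi_d}^2=s$, accounting for $B$--$W$ entanglement via $P_V\preceq sI$. Since $C$ is $X_R$ controlled on $M$, \Cref{lem:removing controlled gate} shows $C$ perturbs this state by trace distance at most $2\sqrt{P(1-P)}\le2\sqrt{s(1-s)}$ (using monotonicity of $t\mapsto t(1-t)$ on $[0,\tfrac12]$); and the remaining operations ($V^\dagger$, the two measurements, the classical accept rule) form one fixed quantum-to-classical channel, so the acceptance probability changes by at most $2\sqrt{s(1-s)}$. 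Therefore $s'\le\tfrac12+2\sqrt{s(1-s)}$, which is $<1$ whenever $2\sqrt{s(1-s)}<\tfrac12$, i.e.\ for $s<s_0\defeq\tfrac{2-\sqrt3}{4}$, the root in $(0,\tfrac12)$ of $s_0(1-s_0)=\tfrac1{16}$.

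The step I expect to be most delicate is this soundness bookkeeping: one must verify that, with $C$ removed, the verifier accepts with probability \emph{exactly} $\tfrac12$ for \emph{every} witness — in particular when $B$ and $W$ are entangled and the reduced $B$-state is arbitrary — so that the $2\sqrt{s(1-s)}$ perturbation really is the only error; this is where the precise design of $Q$ is used. By contrast the completeness side is essentially forced once one realizes that the $B$-part of the witness must be the geometric state $\sum_d\lambda^d\ket d$: the only nontrivial checks are normalizability (equivalent to $p>\tfrac12$) and that $Q$ was arranged so that \emph{both} post-$\Pi_0$ branches accept.
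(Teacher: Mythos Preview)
Your proof is correct and follows essentially the same approach as the paper: the geometric-series witness and \Cref{lem:rewinding} for completeness, and the ``delete the controlled gate via \Cref{lem:removing controlled gate}, then observe the stripped verifier accepts with probability exactly $\tfrac12$'' trick for soundness. Your presentation is in fact slightly sharper in two places---you give the optimal $s_0=\tfrac{2-\sqrt3}{4}$ rather than the paper's convenient $s_0=\tfrac1{16}$, and you are explicit that $P_V\preceq sI$ handles $B$--$W$ entanglement when bounding $P\le s$---but structurally the arguments coincide.
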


\begin{proof} We prove completeness first, and then soundness.
    \paragraph{Completeness.} If $x \in L_{\text{yes}}$, there exists a witness with maximal acceptance probability $p \geq c$ for $V(x)$. This witness $\ket{w}$ is an eigenvector of the operator $P_V$.
    We consider the following state as witness for $V'(x)$:
    \begin{align*}
        \ket{\psi}_B\ket{w}_W \quad \text{ where } \quad \ket{\psi}_B = \sqrt{1 - \gamma^2} \sum_{d=0}^{\infty} \gamma^d \ket{d} \quad \text{ and } \quad \gamma = \frac{1 - p}{p}.
    \end{align*}
    Note that by assumption, $\gamma \leq \frac{1}{c}-1< 1$ (which is necessary for $\ket{\psi}_B$ to be well-defined).
    A direct consequence of \Cref{lem:rewinding} is that the circuit
    \begin{equation}\label{eq:distill}
        \begin{quantikz}
            \lstick{$\ket{0}_R$} & \qw & \targ{} & \qw & \qw \\
            \lstick{$\ket{0}_A$} & \gate[2]{V} & \ctrl{-1} & \gate[2]{V^{\dagger}} & \\
            \lstick{$\ket{w}$} &  & \qw & & \qw
        \end{quantikz}
    \end{equation}
    leads to the state
    \begin{align*}
        p \ket{1}_R\ket{0}_A \ket{w}_W + 2\sqrt{p(1-p)}\ket{\phi}_{RAW} + (1 - p) \ket{0}_R\ket{0}_A\ket{w}_W
    \end{align*}
    where $\ket{\phi}_{RAW}$ satisfies $I \ot \Pi_0 \ket{\phi}_{RAW} = 0$.
    If the measurement outcome on $A$ is $a = 0$, we accept (which does not compromise perfect complenetess).
    If, on the other hand, $a = 1$, so the ancilla register $A$ is measured to be in the state $\ket{0}_A$, the post-measurement state is given by
    \begin{align}\label{eq:q given p}
        (\sqrt{q} \ket{1}_R + \sqrt{1 - q} \ket{0}_R) \ket{0}_A\ket{w}_W,
        %\end{align*}
        \quad\mbox{where}\quad
        %\begin{align}
        q = \frac{p^2}{p^2 + (1-p)^2} > \frac12 \, .
    \end{align}
    In particular, on the register $R$ we are left with the state $\sqrt{q} \ket{1} + \sqrt{1 - q} \ket{0}$.
    We can now rewrite the state on $BR$ before measuring $Q$ as
    \begin{align*}
         & \ket{\psi}_B (\sqrt{q}\ket{1} + \sqrt{1-q}\ket{0})                                                                                                                      \\
         & \qquad = \sqrt{q(1 - \gamma^2)} \ket{0}\ket{1} + \sqrt{1 - \gamma^2} \sum_{d=0}^{\infty} (\gamma^d \sqrt{1-q} \ket{d}\ket{0} + \gamma^{d+1}\sqrt{q} \ket{d + 1}\ket{1}) \\
         & \qquad = \sqrt{q(1 - \gamma^2)} \ket{0}\ket{1} + \sqrt{1 - \gamma^2} \sum_{d=0}^{\infty} \gamma^d \sqrt{1- q} (\ket{d}\ket{0} +  \ket{d+1}\ket{1})
    \end{align*}
    using that $\gamma \sqrt{q} = \sqrt{1 - q}$.
    This is a superposition of the states defined in \cref{eq:accepting measurement}, so we conclude that using the witness $\ket{\psi}_B \ket{w}_W$, the verifier accepts with probability $1$.
    \paragraph{Soundness.} We consider $x \in L_{\text{no}}$ and an arbitrary witness state $\ket{w}_{BW}$. Our goal is to bound the acceptance probability $p_{\accept}(w)$ for $V'(x)$.
    By assumption, if $\ket{\phi}$ denotes the state after the first application of $V$, $\ket{\phi}$ has bounded amplitude on the accepting state: $\norm{\Pi_{\accept}\ket{\phi}} \leq s$.
    That means that by \Cref{lem:removing controlled gate}, \emph{not} applying the CNOT in the verification circuit gives a trace distance error at most $2\sqrt{s(1-s)}$.
    Let $\tilde{V}(x)$ denote a modification of $V'(x)$ without the CNOT gate; the trace distance bound implies that if $\tilde p_{\accept}(w)$ denotes the acceptance probability of $\tilde V(x)$ on the witness $\ket{w}$, we have
    \begin{align*}
        p_{\accept}(w) \leq \tilde p_{\accept}(w) + 2\sqrt{s(1-s)}.
    \end{align*}
    Now, $\tilde V(x)$ simplifies as
    \begin{center}
        \begin{quantikz}
            & \qw & & \meter[2]{Q} \\
            \lstick{$\ket{0}_R$} & \qw  & \qw & & \cw\\
            \lstick{$\ket{0}_A$} & \gate[2]{V} & \gate[2]{V^{\dagger}} & \meter{\Pi_0} & \cw\\
            &  & \qw & & \qw & \qw
        \end{quantikz}
        =
        \begin{quantikz}
            \qw &\meter[2]{Q} \\
            \lstick{$\ket{0}_R$} & \qw & \cw\\
            \lstick{$\ket{0}_A$} &\meter{\Pi_0} & \cw \\
            &  & \qw
        \end{quantikz}
    \end{center}
    The measurement on $A$ measures the state $\ket{0}$ with probability 1. If $w_B$ denotes the reduced state of the witness on $B$, the state before the final measurement is $\rho_{BR} = w_B \ot \proj{0}_R$ and
    \begin{align*}
        \tilde p_{\accept}(w) = \Tr[Q \rho_{BR}] = \sum_{k=0}^{\infty} \bra{\Phi_k} \rho_{BR} \ket{\Phi_k} = \frac12
    \end{align*}
    using that $\rho_{R} = \proj{0}_R$.
    We conclude that
    \begin{align*}
        p_{\accept}(w) \leq \frac12 + 2\sqrt{s(1-s)} = s', % < 1
    \end{align*}
    which is certainly less than 1 if $s<s_0=\frac{1}{16}$.
\end{proof}

\begin{theorem}\label{thm:qma1 with infinite tape}
    Choose a gate set with exact inverses, and exact CNOT and Hadamard gates. Then ${\sf QMA}\subseteq{\sf QMA}^\infty(1,\frac{1}{2}+2^{-r})$.
\end{theorem}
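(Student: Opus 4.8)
The plan is to reduce the statement to \Cref{prop:perfect completeness} after a preliminary amplification of the original verifier. Given a promise problem $L\in{\sf QMA}$, I would first invoke the standard fact that ${\sf QMA}={\sf QMA}(1-2^{-t},2^{-t})$ for any polynomial $t$, and, for a polynomial $t=t(r)$ to be fixed below, fix a ${\sf QMA}(1-2^{-t},2^{-t})$ verifier $V(x)$ for $L$. Since two-sided-error ${\sf QMA}$ does not depend on the choice of universal gate set, I may take $V(x)$ to be written exactly in the chosen gate set $\mathcal C_n$ (universal, with exact inverses, containing exact CNOT and Hadamard, and, if convenient, enlarged with whatever additional exact gates we like): one converts the original verifier's gates to $\mathcal C_n$-circuits by Solovay--Kitaev and then amplifies, which perturbs the acceptance probabilities by at most an inverse-exponential amount that is absorbed into $t$, while leaving $V(x)$ an \emph{exactly-specified} list of $\mathcal C_n$-gates. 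I choose $t$ large enough that, writing $s=2^{-t}$, both $s<s_0$ (where $s_0$ may be taken to be $\tfrac1{16}$ by \Cref{prop:perfect completeness}) and $2\sqrt{s(1-s)}\le 2\sqrt s=2^{1-t/2}\le 2^{-r}$; for instance $t=2r+3$ works for all $r\ge 1$, and is a polynomial. In particular the completeness $c=1-2^{-t}>\tfrac12$ and soundness $s<s_0$ meet the hypotheses of \Cref{prop:perfect completeness}.

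Next I would form the verifier $V'(x)$ from this $V(x)$ exactly as in \cref{eq:modified-verifier-1}, and check that it is a legitimate ${\sf QMA}^\infty$ verifier. It is uniformly generated in polynomial time: it consists of one call to $V$, one call to $V^\dagger$, a single CNOT, the $O(1)$-gate implementation of the measurement $\{Q,I-Q\}$ exhibited in the excerpt (one controlled-increment, one Hadamard on $R$, two standard-basis measurements), and the measurement of $\{\Pi_0,I-\Pi_0\}$ on $A$. Every gate it uses lies in $\mathcal C_n^{\infty}$: the gates of $V$ lie in $\mathcal C_n\subseteq\mathcal C_n^{\infty}$; the gates of $V^\dagger$ do too, and here the exact-inverses hypothesis is precisely what makes $V^\dagger$ a genuine $\mathcal C_n$-circuit; the CNOT into $R$ and the Hadamard on $R$ are in $\mathcal C_n$ by hypothesis; and the controlled-increment and the zero-check are the two additional operations defining $\mathcal C_n^{\infty}$. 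Its witness register is $B\otimes W$ with $B$ the infinite counter, as allowed.

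Finally, \Cref{prop:perfect completeness} applied to this $V$ shows that $V'(x)$ has perfect completeness $c'=1$ and soundness $s'\le\tfrac12+2\sqrt{s(1-s)}\le\tfrac12+2^{-r}$, so $L\in{\sf QMA}^\infty(1,\tfrac12+2^{-r})$, which is the claim. (The underlying content is already in \Cref{lem:rewinding}, used for the completeness analysis, and \Cref{lem:removing controlled gate}, used for the soundness analysis; nothing beyond plugging in remains.)

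The step I expect to be the main obstacle is not a calculation but the exactness bookkeeping around the gate set. One must ensure that after gate-set conversion and amplification the verifier $V$ is a genuine, exactly-specified $\mathcal C_n$-circuit, so that the idealized identities of \Cref{lem:rewinding} ($V\ket{w_0}=\sqrt{1-p}\ket{s_0}+\sqrt p\ket{s_1}$ and $V\ket{w_1}=\sqrt p\ket{s_0}-\sqrt{1-p}\ket{s_1}$, for the resulting $p\ge c$) hold on the nose, and one must ensure that $V^\dagger$, the CNOT into $R$, and the Hadamard in the $Q$-measurement are all implemented exactly: an error even inverse-exponentially small at any of these places would destroy the perfect completeness $c'=1$ that is the entire point. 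This is exactly why the theorem carries the hypotheses ``exact inverses, exact CNOT and Hadamard'' rather than being stated for an arbitrary universal gate set, and it is also why one cannot simply amplify to exponentially small \emph{soundness} for free here and must settle for $s'\le\tfrac12+2^{-r}$.
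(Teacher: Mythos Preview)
Your proposal is correct and follows essentially the same route as the paper: amplify the original ${\sf QMA}$ verifier to soundness $s=2^{-\Theta(r)}$ (the paper picks $c=2/3$, $s=2^{-2(r+1)}$; you pick $c=1-2^{-t}$, $s=2^{-t}$ with $t=2r+3$), then invoke \Cref{prop:perfect completeness} to obtain $s'\le\tfrac12+2\sqrt{s(1-s)}\le\tfrac12+2^{-r}$. Your additional discussion of why the gate-set hypotheses are needed to keep $V^\dagger$, the CNOT, and the Hadamard exact is more explicit than the paper's one-line remark, but the content is the same.
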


\begin{proof}
    By \Cref{prop:perfect completeness}, ${\sf QMA}(c, s) \subseteq {\sf QMA}^\infty(1,\frac{1}{2}+2\sqrt{s(1-s)})$ for $c > \frac12$ and $s < s_0$. Taking (for example) $c=2/3$ and $s=2^{-2(r+1)}$, we have $2\sqrt{s(1-s)}\leq 2\cdot 2^{-(r+1)}=2^{-r}$, so
    ${\sf QMA}={\sf QMA}(c,s)\subseteq{\sf QMA}^\infty(1,\frac{1}{2}+2^{-r})$.
    The conditions on the gate set are required to perform the verification in \cref{eq:modified-verifier-1} without approximation error.
\end{proof}

\begin{remark}
    The technique here cannot be used to boost ${\sf BQP}$ algorithms to have zero error, or even ${\sf BQP}^\infty$ algorithms. However, a similar construction can be used to turn such algorithms into a kind of error-free object called a \emph{transducer}~\cite{belovs2024purifier}. Such objects can be composed without the need for further boosting, and then truncated back down to bounded-error quantum algorithms, leading to composition of ${\sf BQP}$ algorithms without log factors.
\end{remark}

\subsection{A second construction, with improved soundness}\label{sec:second-construction}

We now give an alternative construction for showing ${\sf QMA} \subseteq {\sf QMA}_1^\infty$.
It uses one additional call to the original verification circuit $V(x)$, but has the advantage that it does not require the initial verifier to have a constant promise gap.
With the same set-up as before, we now let $V'(x)$ be the following verifier:
\begin{equation}\label{eq:modified qma 2}
    \begin{quantikz}
        \lstick{$B$} & & & \gate{\circincr} & & & \octrl{1} && \\
        \lstick{$\ket{0}_A$} & \gate[2]{V} & \wire[l][1]["M"{above,pos=.5}]{a} & \octrl{-1} &
        \gate[2]{V^\dagger} & \wire[l][1]["A"{above,pos=.5}]{a} & \gate{R_0} & \gate[2]{V} & \meter{}\\
        \lstick{$W$} && \wire[l][1]["N"{above,pos=.5}]{a} &&&\wire[l][1]["W"{above,pos=.5}]{a}&& &
    \end{quantikz}
\end{equation}
where we recall the open control on $B$ means that $R_0=2\proj{0}_A-I_A$ is performed controlled on $\ket{d}_B$ with $d \geq 1$.
Again, this defines a valid verification circuit for ${\sf QMA}_1^\infty$ (see \Cref{sec:infinite operations}), although we have not yet established its soundness and completeness.

\begin{theorem}\label{thm:perfect completeness 2}
    For any problem in ${\sf QMA}$ with verification circuits $V(x)$ with completeness $c > \frac14$ and soundness $s < \frac14$, the verifier $V'(x)$ described above has perfect completeness $c' = 1$ and soundness $s' \leq 1-(1 - s)(1 - 4s)^2 = 9s-24s^2+16s^3 < 1$.
\end{theorem}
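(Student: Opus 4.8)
The plan is to follow the same two-step "rewinding + counter" strategy as in \Cref{prop:perfect completeness}, but to use the extra call to $V$ together with the reflection $R_0$ to perform one step of an exact Grover-type amplification of the acceptance amplitude before distilling it into the infinite counter. Fix a yes-instance $x$ and let $\ket{w}$ be the witness for $V(x)$ with maximal acceptance probability $p\ge c>\frac14$, so $\ket{w}$ is an eigenvector of $P_V$ and \Cref{lem:rewinding} gives us the two-dimensional invariant subspaces with bases $\{\ket{w_0},\ket{w_1}\}$ and $\{\ket{s_0},\ket{s_1}\}$. First I would trace through the circuit in \cref{eq:modified qma 2} on the candidate witness $\ket{\psi}_B\ket{w}_W$ (with $\ket{\psi}_B=\sqrt{1-\gamma^2}\sum_{d\ge 0}\gamma^d\ket{d}$ for an appropriately chosen $\gamma<1$): applying $V$, the controlled increment on $B$ conditioned on $M$, then $V^\dagger$ produces, exactly as in the first construction, a state of the form $p\ket{1}_?\ket{0}_A\ket{w}+2\sqrt{p(1-p)}\ket{\phi}+(1-p)\ket{0}_?\ket{0}_A\ket{w}$ on the relevant registers, where the middle term has vanishing $\Pi_0$ component. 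The new ingredient is that $R_0$ (the reflection about $\ket{0}_A$, applied conditioned on $\ket{d}_B$ with $d\ge 1$) and the second $V$ then act, on the $\Pi_0$-supported part, exactly like one Grover iteration $V R_0 V^\dagger$ with respect to the $2\times 2$ block of \Cref{lem:rewinding}, boosting the acceptance amplitude; the point of conditioning $R_0$ on $d\ge 1$ is that the $d=0$ branch (which carries amplitude proportional to $\sqrt{1-\gamma^2}$) is treated differently, and I expect the bookkeeping to show that with the right choice of $\gamma$ — centering the gap at $1/4$ suggests $\gamma$ chosen so that the amplified acceptance probability equals exactly the geometric-series ratio needed — the whole state becomes a superposition lying entirely in an accepting subspace, giving $c'=1$. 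I would define the final measurement on $B$ (and the leftover qubits) precisely so that this works out, mirroring \cref{eq:accepting measurement}.

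For soundness, I would use the same "remove the controlled gate" idea via \Cref{lem:removing controlled gate}: on a no-instance, after the first $V$ the accepting amplitude is at most $\sqrt{s}$, so replacing the controlled increment by the identity costs trace distance at most $2\sqrt{s(1-s)}$, and similarly the operations conditioned on the counter can be simplified. After removing the first controlled gate, $V$ and $V^\dagger$ cancel, the counter $B$ is left in $\ket{0}_B$, hence $R_0$ (conditioned on $d\ge1$) acts trivially, and the analysis reduces to a bare $V$ acting on $\ket{0}_A\ket{w_B}_W$ whose acceptance probability is again at most $s$ — but now one must also account for the second trace-distance error introduced by undoing the counter-conditioned $R_0$, plus the final accepting-subspace measurement on $B$. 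Tracking how these bounds compose should give an acceptance probability bounded by something of the shape $1-(1-s)(1-Cs)^2$; matching the stated $s'\le 1-(1-s)(1-4s)^2$ pins down the constant $C=4$, which is presumably exactly what the $d\ge1$-conditioning and the choice of $\gamma$ (gap centered at $1/4$) are engineered to produce.

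The main obstacle I anticipate is the completeness argument: unlike the first construction, where the counter merely "absorbs" a fixed bias $q=p^2/(p^2+(1-p)^2)$, here the reflection $R_0$ conditioned on $d\ge1$ entangles the Grover step with the counter value in a way that does not factorize. I would need to carefully verify that after the second $V$ the resulting global state is \emph{exactly} a superposition of the designated accepting basis states for \emph{all} $p>\frac14$ simultaneously (with $\gamma$ depending on $p$), not just approximately — this is where a subtle choice of $\gamma$, and possibly a subtle definition of the final measurement projector, is forced. A secondary, more routine obstacle is checking that the gate set requirements (exact $R_0$, exact $V,V^\dagger$, exact counter operations) suffice to implement \cref{eq:modified qma 2} without approximation error, so that perfect completeness is genuinely exact; I expect this to go through with the same mild conditions as in \Cref{thm:qma1 with infinite tape}.
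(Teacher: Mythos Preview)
Your proposal misreads the verifier $V'$ in \cref{eq:modified qma 2}: there is \emph{no} final measurement on $B$ analogous to $Q$ in \cref{eq:accepting measurement}. The only measurement is of the qubit $M$ after the second application of $V$. Consequently the completeness argument is not ``distill a bias into the counter, then measure $Q$''; rather, one writes the state after the controlled increment as in \cref{eq:BMN}, observes that $VR_0V^\dagger$ acts on $\ket{s_0},\ket{s_1}$ as the $2\times 2$ reflection with top row $(1-2p,\ 2\sqrt{p(1-p)})$, and computes the rejection probability directly:
\[
p_{\mathrm{reject}} \;=\; (1-p)\sum_{d\ge 1}\bigl|\psi_{d-1}(1-2p)+2p\,\psi_d\bigr|^2.
\]
Choosing $\psi_d\propto\gamma^d$ with $\gamma=1-\tfrac{1}{2p}$ makes every term vanish. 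Your plan never reaches this formula, and your ``define the accepting projector on $B$ to make it work'' step is not available.

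Your soundness sketch also contains two errors. First, the increment in \cref{eq:modified qma 2} is \emph{open}-controlled on $M$, so the branch where it fires has weight $1-p\ge 1-s$, not $\le s$; replacing the gate by the identity via \Cref{lem:removing controlled gate} therefore costs $\Theta(1)$, not $2\sqrt{s(1-s)}$. The paper instead replaces the open-controlled increment by an \emph{unconditional} increment (error on the small $\ket{1}_M$ branch). Second, $B$ is part of the \emph{witness}, so after this simplification it is not ``left in $\ket{0}_B$''; rather, the unconditional shift guarantees support only on $d\ge 1$, whence $R_0$ acts on $\ket{0}_A$ trivially and the circuit collapses to a single $V$. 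Even with these fixes you recover only the weak bound $p_{\accept}\le s+2\sqrt{s(1-s)}$ (which the paper records as a didactic aside). The stated soundness $1-(1-s)(1-4s)^2$ requires a different argument: one shows $P_{V'}$ is block-diagonal over the eigenspaces of $P_V$, so it suffices to bound $p_{\mathrm{reject}}$ for product witnesses $\ket{\psi}_B\ket{w}_W$, and then lower-bounds the displayed sum above by $(1-p)(1-4p)^2$ via Cauchy--Schwarz.
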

\begin{proof}
    We first compute what happens when the witness state is of the form
    \begin{align}\label{eq:product witness}
        \ket{\psi}_B \ket{w}_W, \qquad \ket{\psi}_B = \sum_{d = 0}^{\infty} \psi_d \ket{d},
    \end{align}
    where $\ket{w}$ is an eigenvector of $P_V$ (see \cref{eq:PV}) with acceptance probability $p$ for $V(x)$.
    After applying $V$, we have a state of the form
    \begin{align*}
        \ket{\psi}_B(\sqrt{1 - p} \ket{0}\ket{\phi_0} + \sqrt{p}\ket{1}\ket{\phi_1}),
    \end{align*}
    which the controlled incrementing operation maps to a state
    \begin{align}\label{eq:BMN}
        \ket{\Psi}_{BMN} = \psi_0 \sqrt{p} \ket{0}_B \ket{1} \ket{\phi_1} + \sum_{d=1}^{\infty} \ket{d}_B( \psi_{d-1} \sqrt{1-p}\ket{0}\ket{\phi_0} + \psi_{d} \sqrt{p} \ket{1}\ket{\phi_1}).
    \end{align}
    Note that $R_0\ket{w_0}=\ket{w_0}$ and $R_0\ket{w_1}=-\ket{w_1}$, meaning $R_0$ acts as a Pauli $Z$ gate on the states $\ket{w_0}$ and $\ket{w_1}$.
    Thus, by \Cref{lem:rewinding}, $V R_0 V^\dagger$ acts as the reflection operator
    \begin{align*}
        V R_0 V^\dagger & = \begin{pmatrix}
                                \sqrt{1-p} & \sqrt{p} \\ \sqrt{p} & -\sqrt{1-p}
                            \end{pmatrix}
        \begin{pmatrix}
            1 & 0 \\ 0 & -1
        \end{pmatrix}
        \begin{pmatrix}
            \sqrt{1-p} & \sqrt{p} \\ \sqrt{p} & -\sqrt{1-p}
        \end{pmatrix}
        =
        \begin{pmatrix}
            1 - 2p & 2\sqrt{p(1-p)} \\ 2\sqrt{p(1-p)} & 2p - 1
        \end{pmatrix}
    \end{align*}
    on the basis states $\ket{s_0}=\ket{0}\ket{\phi_0}$ and $\ket{s_1}=\ket{1}\ket{\phi_1}$.
    The verifier $V'(x)$ applies this operation controlled on $d \geq 1$.
    It is now easy to compute the \emph{rejection} probability (i.e. the squared amplitude on $\ket{0}_M$ after applying $VR_0V^\dagger$ to the second term of \cref{eq:BMN}) as
    \begin{equation}\label{eq:rejection probability modified verifier}
        \begin{split}
            p_{\text{reject}} & = \sum_{d = 1}^{\infty}\norm{(\bra{0}_M\otimes I_N)VR_0V^\dagger (\psi_{d-1} \sqrt{1-p}\ket{0}\ket{\phi_0} + \psi_{d} \sqrt{p} \ket{1}\ket{\phi_1})}^2 \\
                              & =\sum_{d = 1}^{\infty}\abs{(1-2p)\psi_{d-1}\sqrt{1-p}+2\sqrt{p(1-p)}\psi_d\sqrt{p}}^2                                                                  \\
                              & = (1 - p) \sum_{d = 1}^{\infty} \abs{\psi_{d-1}(1 - 2p) + 2\psi_d p}^2.
        \end{split}
    \end{equation}

    \paragraph{Completeness.} For $x \in L_{\text{yes}}$, let $\ket{w}_W$ be the state with maximal acceptance probability $p > c > \frac{1}{4}$ for $V(x)$.
    As witness for $V'(x)$ we take the product state $\ket{\psi}_B \ket{w}_W$ with
    \begin{align}\label{eq:inf-witness}
        \ket{\psi}_B = \sqrt{1 - \gamma^2} \sum_{d=0}^{\infty} \gamma^d \ket{d}, \qquad \gamma = 1 - \frac{1}{2p}.
    \end{align}
    Note that $p > \frac{1}{4}$ implies $\abs{\gamma} < 1$.
    From \cref{eq:rejection probability modified verifier}, and our choice of $\ket{\psi}_B$ we conclude that
    \begin{align*}
        p_{\text{reject}} = (1 - \gamma^2)(1 - p)\sum_{d = 0}^{\infty} \gamma^{2d - 2} \abs{(1 - 2p) + 2\gamma p}^2 = 0,
    \end{align*}
    where we have used that $\gamma = 1 - \frac{1}{2p}$, so the verifier accepts with certainty.
    \paragraph{Soundness: simple but suboptimal argument.} We start with an easy soundness proof similar to that of \Cref{prop:perfect completeness}; it achieves a worse soundness than claimed and requires a soundness $s < \frac15$ for the original verifier. We nevertheless include the argument because of its simplicity (but the reader may skip this didactic excursion).
    We consider an instance $x \in L_{\text{no}}$, and some witness state $\ket{w}_{BW}$ After the first application of $V(x)$, we have a state $\ket{\phi}$ that satisfies $\norm{\Pi_0 \ket{\phi}}^2 \leq s$, so by \Cref{lem:removing controlled gate}, we can apply $\incr$ \emph{without} the control at error at most $2\sqrt{s(1-s)}$.
    Let $\tilde V(x)$ denote the modified verifier where the application of $\incr$ is not controlled, with acceptance probability $\tilde p_{\accept}(w)$. Then
    \begin{align*}
        p_{\accept} \leq \tilde p_{\accept}(w) + 2\sqrt{s(1-s)}.
    \end{align*}
    Now, $\tilde V(x)$ simplifies as
    \begin{equation*}
        \begin{quantikz}
            \lstick{$B$} & & \gate{\circincr} & \octrl{1} && \\
            \lstick{$\ket{0}_A$} & \gate[2]{V} &
            \gate[2]{V^\dagger} & \gate{R_0} & \gate[2]{V} & \meter{}\\
            \lstick{$W$} &&&& &
        \end{quantikz}
        =
        \begin{quantikz}
            \lstick{$B$} && \\
            \lstick{$\ket{0}_A$} \qw & \gate[2]{V} & \meter{}\\
            \lstick{$W$}  &&
        \end{quantikz}
    \end{equation*}
    since after applying $\incr$, the state only has support on $\ket{d}$ for $d \geq 1$ and $R_0$ acts trivially on $\ket{0}_A$.
    Thus, by the soundness of $V$, $\tilde p_{\accept}(w) \leq s$.
    We conclude that
    \begin{align*}
        p_{\accept}(w) \leq s + 2\sqrt{s(1-s)} < 1
    \end{align*}
    provided that $s < \frac15$.
    \paragraph{Soundness.} We now give a sharper argument, leading to a better bound for the soundness of $V'(x)$.
    Let $x \in L_{\text{no}}$.
    Consider a basis of eigenvectors $\ket{w^{(k)}}$ for $P_V$ with eigenvalues (i.e. acceptance probabilities) $p_k$.
    We would like to bound the maximal acceptance probability for $V'$, which will correspond to a witness state that is an eigenvector of $P_{V'}$.
    Let
    \begin{align*}
        \HH_k = \HH_B \ot \mathrm{span}\{\ket{w^{(k)}}\}
        \qquad\mbox{so}\qquad
        \HH_{BW} = \bigoplus_k \HH_k.
    \end{align*}
    We claim that $P_{V'}$, which is an operator on $\HH_{BW}$, is block diagonal in these spaces. To prove this, we show that $P_{V'}$ maps each $\HH_k$ to itself.
    Let $\ket{w} = \ket{w^{(k)}}$. Since $\ket{w}$ is an eigenvector of $P_V$, \Cref{lem:rewinding} applies.
    Let ${\cal S}_W\subset \HH_{AW}$ and ${\cal S}_S\subset\HH_{MN}$ denote the subspaces spanned by $\ket{w_0}, \ket{w_1}$ and $\ket{s_0}, \ket{s_1}$ as given in \Cref{lem:rewinding} (note that these depend on $\ket{w}$).
    We can show that $V'$ maps $\HH_B\otimes {\cal S}_W$ to $\HH_B\otimes {\cal S}_B$, and its dual does the reverse, by showing that
    $$\HH_B\ot{\cal S}_W \overset{V}{\leftrightarrow} \HH_B\ot {\cal S}_S
        \overset{C{\incr}}{\leftrightarrow}\HH_B\ot {\cal S}_S
        \overset{V^\dagger}{\leftrightarrow}\HH_B\ot {\cal S}_W
        \overset{C[R_0]}{\leftrightarrow}\HH_B\ot {\cal S}_W
        \overset{V}{\leftrightarrow} \HH_B\ot {\cal S}_S$$
    where the notation $\HH\overset{U}{\leftrightarrow} \HH'$ means the image of $\HH$ under $U$ is in $\HH'$, and the image of $\HH'$ under $U^\dagger$ is in $\HH$; and $C[U]$ denotes a controlled $U$. Each of these statements can be verified by \Cref{lem:rewinding}, or by inspection. Then, since $\ket{0}_A\ket{w}_W\in {\cal S}_W$,
    $$\Pi_{\text{acc}}V'(\HH_B\otimes \ket{0}_A\otimes \ket{w}_W)\in \HH_B\otimes \Pi_{\text{acc}}({\cal S}_S)
        = \HH_B\otimes \mathrm{span}\{\ket{s_1}\}\subset \HH_B\otimes {\cal S}_S$$
    and so
    $${V'}^\dagger\Pi_{\text{acc}}V'(\HH_B\otimes \ket{0}_A\otimes \ket{w}_W)\in \HH_B\otimes {\cal S}_W.$$
    Since $(\bra{0}_A\otimes I_W)({\cal S}_W)=\mathrm{span}\{\ket{w_0}=\ket{0}_A\ket{w}_W\}$ (see \Cref{lem:rewinding}), we have
    $$P_{V'}(\HH_B\ot\ket{w}_W)=(I_B\ot\bra{0}_A\ot I_W){V'}^\dagger\Pi_{\text{acc}}V'(I_B\ot \ket{0}_A\ot I_W)(\HH_B\ot\ket{w}_W)\in \HH_B\otimes \ket{0}_A\ot\ket{w}_W.$$
    We have thus shown that for all $k$, $P_{V'}(\HH_k)=\HH_k$, and so $P_{V'}$ is block diagonal in these spaces.
    In particular, this implies that when maximizing the acceptance probability
    \begin{align*}
        \max_{\ket{\phi} \in \HH_B \ot \HH_W} \Tr[P_{V'} \phi] = \max_{k} \max_{\ket{\phi_k} \in \HH_k} \Tr[P_{V'} \phi_k]
    \end{align*}
    it suffices to consider product state witnesses of the form $\ket{\psi}_B \ket{w}_W$.
    We have already computed the rejection probability of such witnesses in \cref{eq:rejection probability modified verifier}.
    Consider a product witness as in \cref{eq:product witness}, where the acceptance probability of $\ket{w}_W$ for $V(x)$ is $p \leq s$.
    We bound
    \begin{equation}\label{eq:p-reject-soundness}
        \begin{split}
            p_{\text{reject}} & = (1 - p) \sum_{d = 0}^{\infty} \abs{\psi_{d-1}(1 - 2p) + 2\psi_d p}^2                                                                   \\
                              & \geq (1 - p)\sum_{d=1}^{\infty} \left(\abs{\psi_{d-1}}^2 (1 - 2p)^2 + \abs{\psi_{d}}^2 (2p)^2 - 4p(1 - 2p)\abs{\psi_{d-1} \psi_d}\right) \\
                              & \geq (1-p)\left( (1 - 2p)^2 + (2p)^2(1 - \abs{\psi_0}^2) -4p(1 - 2p) \sqrt{1 - \abs{\psi_0}^2} \right)                                   \\
                              & = (1 - p)\left(1 - 2p - 2p\sqrt{1 - \abs{\psi_0}^2}\right)^2                                                                             \\
                              & \geq (1 - p)(1 - 4p)^2 \geq (1 - s)(1 - 4s)^2.
        \end{split}
    \end{equation}
    Here, in the second line we use that for $z_1, z_2 \in \CC$, $$\abs{z_1 + z_2}^2 \geq (\abs{z_1} - \abs{z_2})^2 = \abs{z_1}^2 + \abs{z_2}^2 - 2\abs{z_1 z_2}$$ and $(1 - 2p) > 0$.
    The remainder follows from the Cauchy-Schwartz inequality, and the fact that $\sum_{d \geq 1} \abs{\psi_{d}}^2 = 1 - \abs{\psi_0}^2$.
    Finally, we conclude that since we assumed $s < \frac14$,
    \begin{equation*}
        p_{\accept} = 1 - p_{\text{reject}} \leq 1 - (1 - s)(1 - 4s)^2 < 1.\qedhere
    \end{equation*}
\end{proof}

\begin{corollary}\label{cor:QMAinQMA-inf-exp-soundness}
    Choose a gate set with inverses, in which controlled reflection around $\ket{0}_A$ can be implemented exactly.
    For any polynomial $r$, ${\sf QMA}\subseteq{\sf QMA}^\infty(1,2^{-r})$, and thus, ${\sf QMA}\subseteq {\sf QMA}_1^\infty$.
\end{corollary}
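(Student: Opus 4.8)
The plan is to prove ${\sf QMA}\subseteq{\sf QMA}^\infty(1,2^{-r})$ by applying the perfect-completeness construction of \Cref{thm:perfect completeness 2} not to the given ${\sf QMA}$ verifier directly, but to a \emph{pre-amplified} copy of it, and then to obtain ${\sf QMA}\subseteq{\sf QMA}_1^\infty$ by specializing $r$. Given a problem in ${\sf QMA}={\sf QMA}(2/3,1/3)$ with finite-dimensional verifier $V_0$, I would first run the standard ${\sf QMA}$ amplification of~\cite{kitaev2002classical,marriott2005QAMGames}: for the polynomial $r'=r+4$ this yields a ${\sf QMA}(1-2^{-r'},2^{-r'})$ verifier $V$ on a witness register $W$ and an enlarged ancilla register $A$. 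The reason to do this step first is that it takes place entirely in the finite-dimensional setting, before any infinite counter is introduced, so none of the obstacles to amplifying ${\sf QMA}_1^\infty$ verifiers (no rewinding because the increment is non-invertible, no parallel repetition because there is a single counter) arise; moreover this amplification uses only $V_0$, $V_0^\dagger$, and standard reversible bookkeeping (recording intermediate measurement outcomes into fresh ancillas and computing a threshold, with intermediate measurements deferred in the usual way), so for a suitable choice of gate set -- one with exact inverses and exact classical-reversible gates -- $V$ is implemented \emph{without} approximation error.

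Next I would feed $V$, which has completeness $c=1-2^{-r-4}>\tfrac14$ and soundness $s=2^{-r-4}<\tfrac14$, into the circuit $V'$ of \cref{eq:modified qma 2}. By \Cref{thm:perfect completeness 2}, $V'$ has perfect completeness $c'=1$ and soundness $s'\le 9s-24s^2+16s^3\le 9s=\tfrac{9}{16}\cdot 2^{-r}<2^{-r}$, where the middle inequality holds because $24s^2-16s^3=8s^2(3-2s)\ge 0$ for $s<\tfrac14$. Besides $V$ and $V^\dagger$ (available since the gate set has inverses), $V'$ uses only a controlled increment and the zero-state check -- both primitives of $\mathcal C_n^\infty$ -- together with the reflection $R_0$ around $\ket{0}_A$ controlled on $\ket{d}_B$ with $d\ge 1$; as noted in \Cref{sec:infinite operations}, the last operation is obtained by writing the predicate ``$d\ge 1$'' into a fresh qubit with the zero-state gate and then applying $R_0$ controlled on that single qubit, i.e.\ the ``controlled reflection around $\ket{0}_A$'' gate assumed exact. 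Hence $V'$ is a valid ${\sf QMA}^\infty$ verifier over $\mathcal C_n^\infty$, and this gives ${\sf QMA}\subseteq{\sf QMA}^\infty(1,2^{-r})$ for every polynomial $r$. Taking $r$ to be the constant $2$ then yields ${\sf QMA}\subseteq{\sf QMA}^\infty(1,\tfrac14)\subseteq{\sf QMA}^\infty(1,\tfrac13)={\sf QMA}_1^\infty$, since a verifier of soundness $\tfrac14$ is in particular one of soundness $\tfrac13$.

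The only delicate point -- bookkeeping more than a genuine obstacle -- is checking that the amplification can be performed \emph{exactly} over the chosen gate set, so that composing it with the construction of \Cref{thm:perfect completeness 2} preserves perfect completeness $c'=1$ rather than merely near-perfect completeness; this is precisely why the corollary is stated with the (mild) gate-set hypotheses -- exact inverses and an exact controlled reflection around $\ket{0}_A$ -- rather than for an arbitrary universal gate set. Beyond that, the argument is just the combination of the cited amplification, \Cref{thm:perfect completeness 2}, and the elementary estimate above.
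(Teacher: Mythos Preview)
Your proof is correct and follows essentially the same route as the paper: first amplify the ${\sf QMA}$ verifier to soundness $2^{-r'}$ for a suitably large polynomial $r'$, then apply \Cref{thm:perfect completeness 2}, and observe that the resulting soundness $s'=O(s)$ is below $2^{-r}$. Your version is in fact more explicit than the paper's (you give the concrete choice $r'=r+4$ and the bound $s'\le 9s$, and you spell out the specialization to ${\sf QMA}_1^\infty$); one minor remark is that your concern about the amplification step being implemented \emph{exactly} is not really needed---\Cref{thm:perfect completeness 2} applies to whatever circuit $V$ you actually build, so as long as that circuit has $c>\tfrac14$ and $s<\tfrac14$, the construction yields perfect completeness, and the gate-set hypotheses are only there to ensure $V'$ itself (i.e.\ $V$, $V^\dagger$, the controlled increment, and the controlled $R_0$) can be written exactly.
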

\begin{proof}
    Let $L\in{\sf QMA}$. For any polynomial $r'$, let $V'$ be a ${\sf QMA}(1-2^{-r'},2^{-r'})={\sf QMA}$ verifier for $L$. By \Cref{thm:perfect completeness 2}, there is a ${\sf QMA}^\infty(c',s')$ verifier $V'$ with perfect completeness $c'=1$, and soundness $s'=O(s)=O(2^{-r'})$. If we take $r'$ a sufficiently large polynomial, we have $s'\leq 2^{-r}$, establishing $L\in {\sf QMA}^\infty(1,2^{-r})$.
\end{proof}

Note that this proves in particular that the dependence on the initial choice of gate set is limited; for a broad class of gate sets (for example, Clifford+T) \Cref{cor:QMAinQMA-inf-exp-soundness} %(or \Cref{thm:perfect completeness 2}) 
applies and hence ${\sf QMA}_1^\infty = {\sf QMA}$ is independent of the choice of gate set within this broad class. What happens if, e.g., one does  \emph{not} allow exact inverses is an open question.

\section{{\sf QMA} Amplification} \label{sec:amplification}

A number of past works have shown how to use a $\sf QMA(c,s)$ verifier $V$ to construct a verifier $V'$ with better completeness and soundness~\cite{kitaev1999QuantumNP,marriott2005QAMGames,nagaj2009FastAmpQMA,fefferman2016errorReduction}, considering efficiency in various resources, including:
\begin{itemize}
    \item The number of qubits used by Merlin to communicate the witness: $l_M'=\log\dim W'$.
    \item The number of qubits used by Arthur to run the verifier: $l_A'=\log\dim W' +\log\dim A'$.
    \item The time used by Arthur, that is, the number of elementary gates in the verifier, $t_A'$.
\end{itemize}
Such a procedure is a called a \emph{{\sf QMA} amplifier}. Since the above mentioned ${\sf QMA}$ amplifiers can reduce soundness and completeness error to $2^{-r}$ using resources that scale polynomially in $r$, they imply ${\sf QMA}={\sf QMA}(1-2^{-r},2^{-r})$ for any polynomial $r$.

If we truncate the infinite-dimensional register in the constructions in \Cref{sec:main} we get a standard ${\sf QMA}$ verifier, on a finite space, but which no longer has perfect completeness. It does have significantly less completeness error than the original verifier though, making it a ${\sf QMA}$ amplifier\footnote{The terminology ${\sf QMA}$ amplification historically refers to increasing the \emph{gap} $c-s$. We use a slightly different meaning here, but it is similar in spirit.} (\Cref{thm:truncate}). Unlike the {\sf QMA} amplifiers cited above, the soundness error is not improved (it is slightly worse) but the upside is that we can reduce the completeness error to $2^{-q}$ using resources that scale \emph{logarithmically in $q$}, showing ${\sf QMA}={\sf QMA}(1-2^{-2^r},2^{-r})$ for any polynomial $r$ (\Cref{thm:QMA-small-completeness-error}). For any choice of $q$, our procedure uses just $O(1)$ calls to $V$, which seems somewhat remarkable, but is actually not new.
A previous observation \cite{nagaj2009FastAmpQMA,kobayashi2013stronger}, also used to show that ${\sf QCMA}={\sf QCMA}_1$ \cite{jordan2012QCMAisQCMA1}, is that if it is possible for the prover to communicate the \emph{exact} maximal acceptance probability, it is possible to achieve perfect completeness.
We discuss the result of truncating this procedure (i.e.~approximating the maximal acceptance probability), which we call \emph{probability truncation}, to get a ${\sf QMA}$ amplifier that similarly only makes $O(1)$ calls to the original verifier to improve the completeness error (but not the soundness) to $2^{-q}$ for any $q$. In this procedure however, required resources scale polynomially in $q$.

In the remainder of this section, we formally describe and analyze our ${\sf QMA}$ amplifier, and compare it with the state of the art \cite{fefferman2016errorReduction}, as well as probability truncation, which we also formalize (it was implicit in~\cite{kobayashi2013stronger}). A comparison between the different $\sf QMA$ amplifiers can be found in \Cref{tab:comparison}.

\paragraph{New ${\sf QMA}$ amplifier.} We first formally define what it means to truncate a ${\sf QMA}^\infty$ verifier.

\begin{definition}\label{def:truncate}
    Let $V$ be a ${\sf QMA}^\infty(c,s)$ verifier. Then we define its $D$-level truncation, $\tilde{V}_D$ (or $\tilde{V}$, if $D$ is clear from context) by modifying $V$ as follows. We replace the register $B$ with the register $\tilde B$ on $\mathrm{span}\{\ket{d}\}_{d=0}^{D-1}$. The increment $\incr$ is replaced with increment modulo $D$; and operations controlled on $\ket{0}_B$ are now controlled on $\ket{0}_{\tilde{B}}$.
\end{definition}

It is relatively easy to adapt the proof of either \Cref{prop:perfect completeness} or \Cref{thm:perfect completeness 2} to a truncated version, proving \Cref{thm:truncate} below, but it will be helpful to use the following lemma, which we prove in \Cref{sec:proof truncating witness} in order to show ${\sf QMA}={\sf QMA}^\infty$ (\Cref{thm:QMA-inf}).
\begin{lemma}\label{lem:truncate}
    Let $V$ be a ${\sf QMA}^\infty(c,s)$ verifier for $L$ with witness register $BW$ and ancilla register $A$, that makes $\ell$ applications of the controlled increment $\incr$.
    Let $m>\log \ell+3$ be bounded from above by a polynomial in $|x|$, and let $D$ be the smallest power of two greater than $2^{m}+\ell$. Let $\tilde{V}_D$ be the $D$-level truncation of $V$, and define its ancillary register $A'$ to consist of $A$, and the $\log(D)-m$ most significant qubits of $\tilde{B}$, and its witness register $W'$ to consist of $W$ and the remaining $m$ (least significant) bits of $\tilde{B}$. Then $\tilde{V}_D$ is a ${\sf QMA}(c',s)$ verifier for $L$ for some $c'=c-O(\sqrt{2^{-m}\ell})$.    \end{lemma}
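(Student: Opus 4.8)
The plan is to establish the soundness bound $s'\le s$ and the completeness bound $c'\ge c-O(\sqrt{2^{-m}\ell})$ separately; that $\tilde{V}_D$ is a legitimate uniformly-generated polynomial-time verifier is immediate, since (using $m>\log\ell+3$, which forces $2^m+\ell<2^{m+1}$, so $D=2^{m+1}$) increment-mod-$D$ and the zero-check act on the $\log(D)=m+1$ qubits of $\tilde B$ and are standard reversible circuits. The starting observation is that $\tilde{V}_D$ is almost literally $V$: with a witness $\ket{\chi}$ on $W'$, the register $\tilde B$ starts in $\mathrm{span}\{\ket{d}:0\le d<2^m\}$ (its top qubit is an ancilla, initialized to $\ket{0}$), and after the $\ell$ increments its value is at most $2^m-1+\ell<D$, so wrap-around never occurs: increment-mod-$D$ acts exactly as $\incr$, and ``$\tilde B=0$'' holds iff the $B$-value is $0$. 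Identifying $\ket{\chi}$ with the witness $\iota(\ket{\chi})\in BW$ carrying the same value in $B$, the two computations coincide and $p_{\accept}^{\tilde{V}_D}(\chi)=p_{\accept}^{V}(\iota(\chi))$. For $x\in L_{\mathrm{no}}$ this gives $s'\le s$ at once, and it reduces the completeness claim to showing that $V$ has a witness supported on $B$-values in $[0,2^m)$ that it accepts with probability $\ge c-O(\sqrt{2^{-m}\ell})$.

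For completeness I would fix $x\in L_{\mathrm{yes}}$ and any witness $\ket{w}$ with $p:=p_{\accept}^{V}(w)\ge c$, and let $V_\circ$ denote $V$ with every zero-check replaced by the identity. Since $\incr$ only increases the $B$-value, a state supported on $B$-values $\ge 1$ keeps that property throughout $V$, and on such states $V$ acts exactly as $V_\circ$; moreover $V_\circ$ (having only increments and qubit gates) commutes with the downward shift $\ket{d}\mapsto\ket{d-t}$ on states supported on $B$-values $\ge t$, and so does $\Pi_{\accept}$. Fixing $K$ of order $2^m$ with $K\le 2^m-1$ (say $K=2^{m-1}$) and an offset $r\in\{0,\dots,K-1\}$, partition $\mathbb{N}$ into windows $I_0=[0,r),I_1=[r,r+K),I_2=[r+K,r+2K),\dots$, and let $\hat{w}^{(j)}$ be the normalization of $\Pi_{I_j}\ket{w}$. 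Each $\hat{w}^{(j)}$ can be turned into a witness on $B$-values in $[0,2^m)$ with the same $V$-acceptance probability: the (at most one) window containing $B$-value $0$ already lies in $[0,2^m)$, and a window lying in $B$-values $\ge 1$ can be shifted down into $[1,K+1)\subset[0,2^m)$, where the two invariances above together with the absence of wrap-around give $p_{\accept}^{\tilde{V}_D}(\text{shifted }\hat{w}^{(j)})=p_{\accept}^{V_\circ}(\text{shifted})=p_{\accept}^{V_\circ}(\hat{w}^{(j)})=p_{\accept}^{V}(\hat{w}^{(j)})$.

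It remains to find an offset and a window with $p_{\accept}^{V}(\hat{w}^{(j)})$ close to $p$. Because $V$ uses only $\ell$ increments, $\Pi_{\accept}V\ket{0}_A\Pi_{I_j}\ket{w}$ is supported on $B$-values in $I_j$ together with a ``spillover'' strip of width $\ell$ immediately above $I_j$; write this vector as $\ket{a_j}+\ket{b_j}$ with $\ket{a_j}$ supported on $B$-values in $I_j$ and $\ket{b_j}$ on the strip. The $\ket{a_j}$ are mutually orthogonal, the $\ket{b_j}$ are mutually orthogonal, and together they sum to $\Pi_{\accept}V\ket{0}_A\ket{w}$, so by the triangle inequality $\sqrt{p}\le\sqrt{\sum_j\norm{a_j}^2}+\sqrt{\sum_j\norm{b_j}^2}$; here $\sum_j\norm{a_j}^2\le\sum_j\norm{\Pi_{I_j}\ket{w}}^2\,p_{\accept}^{V}(\hat{w}^{(j)})\le\max_j p_{\accept}^{V}(\hat{w}^{(j)})$, while $\sum_j\norm{b_j}^2$ is at most the weight of $\ket{w}$ in the width-$\ell$ strips sitting just below the window boundaries. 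Averaging this last quantity over the $K$ offsets $r$ — each $B$-value lies in such a strip for at most $\ell$ of them — some offset makes it $\le\ell/K$, and then $\max_j p_{\accept}^{V}(\hat{w}^{(j)})\ge(\sqrt{p}-\sqrt{\ell/K})^2\ge p-2\sqrt{\ell/K}\ge c-O(\sqrt{2^{-m}\ell})$, so the corresponding shifted witness gives $\tilde{V}_D$ completeness $c'=c-O(\sqrt{2^{-m}\ell})$. I expect the main obstacle to be exactly this last paragraph: turning the ``almost orthogonal'' decomposition $\Pi_{\accept}V\ket{0}_A\ket{w}=\sum_j(\ket{a_j}+\ket{b_j})$ into the clean inequality above (the $B$-supports of the windows overlap only in the $\ell$-wide strips), and keeping careful track that the translation-invariance of $V_\circ$ is only ever invoked on states that stay at $B$-value $\ge 1$, where $V$ and $V_\circ$ genuinely agree.
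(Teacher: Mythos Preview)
Your proof is correct and takes a genuinely different route from the paper's. Both arguments begin identically—observing that on witnesses supported on $B$-values $<2^m$ there is no wrap-around, which gives soundness for free and reduces completeness to finding a low-support witness for $V$—and both rely on the same shift-invariance principle (on states that never reach $d=0$, the zero-checks are inert, so $V$ coincides with a shift-commuting circuit). They diverge in how they extract a low-support witness from an arbitrary one.

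The paper first \emph{modifies} the witness: it slices $\mathbb{N}$ into fixed blocks of size $2^{m-3}$ and, inside each block, uses an averaging lemma (\Cref{lem:remove small consecutive subspace}) to excise an $\ell$-wide low-weight sub-interval, producing $\ket{\tilde w}$ close in trace distance to $\ket{w}$ but with built-in $\ell$-wide gaps. Those gaps guarantee that the post-$V$ images $\ket{\xi_i}$ of the resulting pieces are \emph{exactly} orthogonal, so $p_{\accept}(\tilde w)=\sum_i\alpha_i\|\xi_i\|^2$. The paper then keeps the piece containing $d=0$ and appends (a shift of) the best remaining piece.

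You instead partition $\mathbb{N}$ into windows of size $K\approx 2^{m-1}$ with a \emph{variable} offset $r$, accept that the post-$V$ images of adjacent windows overlap in $\ell$-wide spillover strips, and control this overlap directly via the triangle inequality $\sqrt{p}\le\sqrt{\sum_j\|a_j\|^2}+\sqrt{\sum_j\|b_j\|^2}$. The averaging is over the offset $r$ rather than over sub-intervals inside fixed blocks, and the payoff is that you only ever need a \emph{single} window as the final witness. Your approach is a bit more elementary (no auxiliary gap-punching lemma, no two-piece recombination), while the paper's is slightly more modular (the approximation step and the orthogonality step are cleanly separated). Both give the same $O(\sqrt{2^{-m}\ell})$ completeness loss.
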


\begin{proposition}
    \label{thm:truncate}
    Let $V(x)$ be a ${\sf QMA}(c,s)$ verifier for $L$, for some $s\in [0,1/4)$ and $c\in (1/4,1]$, with $l_M=\log(\dim(W))$ witness qubits; $l_A=\log(\dim(W))+\log(\dim(A))$ total qubits, and $t_A$ time complexity. Then for any $q\in \mathbb{N}$, there exists a ${\sf QMA}$ verifier $\tilde{V}(x)$ for $L$ with soundness $s'=1-(1-s)(1-4s)^2$ and completeness $c'=1-2^{-q}$, that uses the following resources:
    \begin{enumerate}
        \item $l_M'=l_M+\log\frac{q}{c-s}+O(1)$ qubits for the witness space;
        \item $l_A'=l_A+\log\frac{q}{c-s}+O(1)$ total qubits;
        \item 2 calls to $V$ and 1 call to $V^\dagger$; and
        \item $t_A'=O(t_A+\log\frac{q}{c-s})$ time complexity.
    \end{enumerate}
    The verifier $\tilde{V}$ uses the gates used by $V$, their inverses, and some gates that can exactly implement controlled $R_0$, and controlled increment modulo any power of 2.
\end{proposition}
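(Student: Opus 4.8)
The plan is to take the second construction of \Cref{sec:second-construction} and replace its infinite counter by a finite one, tracking exactly how much completeness is lost. Given the ${\sf QMA}(c,s)$ verifier $V$, I would form the ${\sf QMA}^\infty$ verifier $V'$ of \cref{eq:modified qma 2} (which makes $2$ calls to $V$, one to $V^\dagger$, one controlled increment, and one controlled $R_0$), and then pass to its $D$-level truncation $\tilde V_D$ in the sense of \Cref{def:truncate}, with $D$ the smallest power of two exceeding $2^m$ for $m=\log\frac{q}{c-s}+O(1)$ to be fixed at the end. Since $V'$ uses exactly one controlled increment, \Cref{lem:truncate} applies with $\ell=1$: it tells us how to split the $\log D$ qubits of $\tilde B$ into the $m$ least significant ones (which join the witness register, giving $l_M'=l_M+m$) and the $\log D-m$ most significant ones (which join the ancilla), and that $\tilde V_D$ is then a syntactically valid ${\sf QMA}$ verifier. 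The gate-set requirements are those of \Cref{thm:perfect completeness 2} plus a controlled increment modulo a power of two, which is a standard reversible circuit of $O(\log D)=O(\log\frac{q}{c-s})$ gates; this, together with the $O(1)$ calls to $V$ and $V^\dagger$, yields the claimed time and qubit counts.

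For soundness I would rerun the argument in the proof of \Cref{thm:perfect completeness 2} with $\HH_B$ replaced by $\HH_{\tilde B}$. The operator $P_{\tilde V_D}$ is still block diagonal over the subspaces $\HH_{\tilde B}\otimes\mathrm{span}\{\ket{w^{(k)}}\}$ — increment modulo $D$ and $C[R_0]$ still act within $\HH_{\tilde B}$ tensored with the two-dimensional subspaces spanned by $\ket{w_0},\ket{w_1}$ resp.\ $\ket{s_0},\ket{s_1}$ of \Cref{lem:rewinding} — so it suffices to bound the acceptance probability on product witnesses $\ket{\psi}_{\tilde B}\ket{w}_W$ with $\ket{w}$ an eigenvector of $P_V$ of eigenvalue $p\le s$. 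The rejection probability becomes the cyclic version of \cref{eq:rejection probability modified verifier}, namely $(1-p)\sum_{d=0}^{D-1}\abs{\psi_{(d-1)\bmod D}(1-2p)+2\psi_d p}^2$, and the chain \cref{eq:p-reject-soundness} carries over — in fact more cleanly, since now $\sum_d\abs{\psi_{(d-1)\bmod D}}^2=\sum_d\abs{\psi_d}^2=1$ and the cross term is bounded by $1$ directly by Cauchy--Schwarz, with no $\abs{\psi_0}^2$ correction. This gives $p_{\text{reject}}\ge(1-s)(1-4s)^2$, hence soundness $s'=1-(1-s)(1-4s)^2$, unchanged.

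The substantive part is completeness. For $x\in L_{\text{yes}}$, let $p\ge c$ be the optimal acceptance probability of $V$ with optimal witness $\ket{w}$. The prover submits $\ket{\psi}_{\tilde B}\ket{w}_W$, where $\ket{\psi}_{\tilde B}$ is the renormalized truncated geometric state proportional to $\sum_{d=0}^{D-2}\gamma^d\ket{d}$ with $\gamma=1-\frac{1}{2p}$, whose top amplitude is set to zero so that after the single increment the support still lies in $\{1,\dots,D-1\}$ and the modulo-$D$ wraparound never fires. All interior terms of the resulting (non-cyclic) rejection sum vanish by $(1-2p)+2\gamma p=0$, exactly as in the infinite case, and the one surviving boundary term is $O\big((1-p)(1-2p)^2\abs{\gamma}^{2D}\big)$, the renormalization factor being $O(1)$ once $\gamma^{2D}$ is bounded away from $1$. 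Since $p\in(1/4,1]$ forces $\abs{\gamma}<1$ with $\log\frac{1}{\abs{\gamma}}=\Omega\!\big(c-\tfrac14\big)$ uniformly over $p\in[c,1]$ — and hence $\Omega(c-s)$ after the constant-overhead affine rescaling of the acceptance probability that recenters the promise interval symmetrically about $\tfrac14$ — taking $D=\Theta(q/(c-s))$ drives $p_{\text{reject}}\le 2^{-q}$, i.e.\ $c'=1-2^{-q}$.

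I expect the delicate point to be exactly this last estimate: lower-bounding $\log\frac{1}{\abs{\gamma(p)}}$ uniformly over $p\in[c,1]$ and relating it to the promise gap, and checking that confining the witness support to $\{0,\dots,D-2\}$ — rather than all of $\{0,\dots,D-1\}$, which would trigger the wraparound and collapse completeness back to a constant — costs only an $O(1)$ factor in normalization. Both are controlled by the single choice $D=\Theta(q/(c-s))$, so the bookkeeping closes; everything else is a routine adaptation of \Cref{thm:perfect completeness 2} and \Cref{lem:truncate}.
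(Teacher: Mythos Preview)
Your proposal is correct and follows essentially the same approach as the paper: recentre the promise around $\tfrac14$, truncate the construction of \Cref{sec:second-construction} at level $D=\Theta(q/(c-s))$, and use the renormalized truncated geometric witness (supported on $\{0,\dots,D-2\}$ so the wraparound never fires) for completeness. The only tactical differences are that the paper invokes \Cref{lem:truncate} directly for soundness (rather than rerunning the block-diagonality argument in the cyclic setting) and bounds the completeness loss via the trace distance $T(\tilde\psi,\psi)=\gamma^{D-1}$ between the truncated and infinite witnesses (rather than computing the surviving boundary term of the rejection sum explicitly); both routes yield the same estimates.
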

We prove this theorem by truncating the second construction, from \Cref{sec:second-construction}. The properties of our verifier are summarized and compared with existing constructions in \Cref{tab:comparison}.
We can get a similar construction by truncating the first construction, from \Cref{sec:first-construction}. This has the advantage of only using \emph{1} call to each of $V$ and $V^\dagger$, but it requires $c-s$ to be constant.
\begin{proof}
    We can assume without loss of generality that $c$ and $s$ are centered around $\frac{1}{4}$ so that:
    $$c-\frac{1}{4} = \frac{1}{4}-s =: \delta,$$
    because if not, we can always construct a verifier with this property that accepts or rejects with some probability, and otherwise, determines whether to accept or reject by making a single call to $V$. Note that this implies $c\leq \frac{1}{2}$. We also have $\delta = \frac{c-s}{2}$.

    Let $\tilde{V}=\tilde{V'}_D$ be the $D$-level truncation of the ${\sf QMA}^\infty$ verifier $V'$ defined in \Cref{sec:second-construction}, for some $D$ to be specified shortly.
    Observe that this verifier uses witness space $l_M'=\log\dim\tilde{B}+\log\dim W = \log D+l_M$, and total space $l_A'=\log\dim\tilde{B}+\log\dim W+\log\dim A = \log D+l_A$. In terms of time, it consists of the following operations:
    \begin{itemize}
        \item two calls to $V$, and one to $V^\dagger$
        \item one controlled mod-$D$ increment, which costs $O(\log D)$ basic gates
        \item one controlled $R_0$, which costs $O(\log D)$ basic gates to control on being in $\ket{0}_{\tilde{B}}$, and $\log\dim A$ gates to implement $R_0$.
    \end{itemize}
    We can assume that $t_A=\Omega(\log\dim A)$, because otherwise some ancillary qubits are never touched, so we have $t_{A}'=O(t_A+\log D)$. We will eventually set $D=\Theta(\frac{q}{c-s})$, which yields the claimed complexities.

    \paragraph{Soundness.} We note that $V'$ makes $\ell=1$ calls to the controlled increment, so we can apply \Cref{lem:truncate} with $D= 2^m$ for some polynomial $m>4$.
    By~\Cref{thm:perfect completeness 2}, $V'$ has soundness $1-(1-s)(1-4s)^2$, and by \Cref{lem:truncate}, $\tilde{V}$ has the same soundness.

    \paragraph{Completeness.} If we appeal to \Cref{lem:truncate}, since $V'$ has completeness 1, we get completeness $c'=1-O(\sqrt{2^{-m}})=1-O(D^{-1/2})$. This is only better than previous $\sf QMA$-amplification procedures such as~\cite{fefferman2016errorReduction} when $c-s$ is very small. However, we can do better by truncating the witness from \Cref{thm:perfect completeness 2} directly. Consider the acceptance probability $\tilde{p}_{\text{acc}}(\tilde\psi)$ of $\tilde{V}$ on a state of the form
    $$\ket{\tilde\psi}_{\tilde B}\ket{w}_W,\qquad \ket{\tilde\psi}=\sqrt{\frac{1-\gamma^2}{1-\gamma^{2(D-1)}}}\sum_{d=0}^{D-2}\gamma^d\ket{d}, \qquad \gamma=1-\frac{1}{2p},$$
    for $\ket{w}$ an optimal witness for $V$, with acceptance probability $p\geq c$. This is just a truncated version of $\ket{\psi}_B\ket{w}$, for $\ket{\psi}_B$ as in \cref{eq:inf-witness}.
    Since $\ket{\tilde\psi}$ is not supported on $\ket{D-1}_{\tilde B}$, and we only make one increment to this register, the increment modulo $D$ always behaves identically to the increment $\incr$, and so in particular, the acceptance probability $\tilde p_{\text{acc}}(\tilde\psi)$ of $\tilde{V}$ on input $\ket{\tilde\psi}\ket{w}$, is the same as the acceptance probability $p_{\text{acc}}(\tilde\psi)$ of $V'$ on input $\ket{\tilde\psi}\ket{w}$. We have:
    $$\tilde p_{\text{acc}}(\tilde{\psi}) = p_{\text{acc}}(\tilde{\psi}) \geq p_{\text{acc}}(\psi)-T(\tilde\psi,\psi),$$
    where $p_{\text{acc}}(\psi)=1$ is the acceptance probability of $V'$ on input $\ket{\psi}\ket{w}$, and $T(\tilde\psi,\psi)$ is the trace distance between $\ket{\tilde\psi}\ket{w}$ and $\ket{\psi}\ket{w}$. By \cref{eq:inf-witness},
    $$\braket{\tilde\psi}{\psi} = \frac{1-\gamma^2}{\sqrt{1-\gamma^{2(D-1)}}}\sum_{d=0}^{D-2}\gamma^{2d} = \sqrt{1-\gamma^{2(D-1)}}.$$
    Thus, we have
    $$T(\tilde\psi,\psi) = \sqrt{1-|\braket{\tilde\psi}{\psi}|^2}=\gamma^{D-1}.$$
    \noindent Since $c\in (1/4,1/2]$, we have:
    $$|\gamma| = \abs{\frac{1-2p}{2p}} \leq \abs{\frac{\frac{1}{2}-2\delta}{\frac{1}{2}+2\delta}}=\abs{1-\frac{4\delta}{\frac{1}{2}+2\delta}}\leq \abs{1-4\delta}.$$
    Thus
    $$\tilde{p}_{\text{acc}}(\tilde\psi)\geq 1-\left(1-4\delta\right)^{D-1}\geq 1-e^{-4\delta(D-1)},$$
    so there is some choice of $D=\Theta(q/\delta)=\Theta(\frac{q}{c-s})$ such that $\tilde{p}_{\text{acc}}\geq 1-2^{-q},$ establishing the claimed completeness of $c'=1-2^{-q}$.
\end{proof}

\noindent \Cref{thm:QMA-small-completeness-error}, restated here in slightly more detail, follows as a corollary of \Cref{thm:truncate}.
\vskip10pt
\noindent\textbf{Theorem~\ref*{thm:QMA-small-completeness-error}.} \emph{For any polynomial $r$, ${\sf QMA}={\sf QMA}(1-2^{-2^r},2^{-r})$,
if ${\sf QMA}(1-2^{-2^r},2^{-r})$ is defined with respect to a gate set that includes exact inverses, and allows for exact implementation of controlled increment and controlled $R_0$.}
\vskip10pt

Note that we cannot assume that ${\sf QMA}(1-2^{-2^r},2^{-r})$ is gate-set independent, since mapping from one gate set to another may introduce $\omega(2^{-2^r})$ error.

\begin{proof}[Proof of \Cref{thm:QMA-small-completeness-error}]
    Given a ${\sf QMA}={\sf QMA}(1-2^{-r'},2^{-r'})$ verifier for some polynomial $r'$, we can apply the construction in \Cref{thm:truncate} with $q=2^{r}$ to get a polynomial-time verifier $\tilde{V}$ with completeness $c'=1-2^{-2^{r}}$, and soundness
    $$s'\leq 9s \leq 2^{-r'+\log(9)}.$$
    Taking $r'>r+\log(9)$ gives $s'\leq 2^{-r}$, as desired.
\end{proof}

\paragraph{Probability truncation $\sf QMA$ amplifier.} Let us also compare with the \emph{probability truncation} amplifier, based on~\cite{nagaj2009FastAmpQMA,kobayashi2013stronger}, that,
like our amplifier, amplifies the completeness, but not the soundness. This procedure is similar to the one used to show ${\sf QCMA}={\sf QCMA}_1$, where ${\sf QCMA}$ is like ${\sf QMA}$, except that the witness is restricted to be classical \cite{jordan2012QCMAisQCMA1}. In that case, it may be assumed that the acceptance probability has a polynomial-length description, and can be given as part of the witness. Given a verifier, a classical witness, and some $p>1/2$ that is claimed to be its acceptance probability, it is possible to check this claim using 2 calls to the verifier and its inverse. In ${\sf QMA}$, we cannot assume that the acceptance probability has an efficient (or finite) description, but even giving a $q$-bit approximation of the success probability can amplify completeness. We sketch such a protocol here, based on the variant in \cite{kobayashi2013stronger}, where this idea was used to show perfect completeness in case the prover and verifier pre-share some EPR pairs, as well as for interactive proofs with an odd number of rounds.
This approach is similar in spirit to our work, in the sense that the completeness is enhanced by extending the witness to contain information about the acceptance probability of the original verifier. However, as we will see, probability truncation only reaches a completeness exponentially close to 1 while keeping the verifier polynomial.

We now describe the modified verifier, given a ${\sf QMA}$ verifier with completeness $c > \frac12$ and soundness $s < \frac12$.
The witness of the modified verifier is on two registers, $P$ and the original witness register $W$.
The verifier measures the $q$-qubit register $P$, and interprets the resulting bitstring as a probability $p \in [0,1]$. If $p < \frac12$, the verifier rejects.
Otherwise, the verifier runs the following circuit
\begin{equation}\label{eq:probability truncation}
    \begin{quantikz}
        \lstick{$\ket{0}$} & \gate{U(p)} & \ctrl{1} & \gate{U(p)^\dagger} & \meter{} \\
        \lstick{$\ket{0}_A$} & \gate[2]{V} & \gate{Z} & \gate[2]{V^{\dagger}} & \meter{} \\
        \lstick{$W$} &  & \qw & & \qw
    \end{quantikz}
\end{equation}
where $U(p)$ is the single-qubit unitary
\begin{align*}
    U(p) = \begin{pmatrix}
               \sqrt{1 - \gamma} & \sqrt{\gamma} \\ \sqrt{\gamma} & -\sqrt{1-\gamma}
           \end{pmatrix} \qquad \text{ for } \quad \gamma = \frac{1}{2p}.
\end{align*}
The verifier rejects if they measure both registers in the initial states $\ket{0}$ and $\ket{0}_A$.
It can be shown that if the witness has the form $\ket{p}_P \ket{w}_W$ with $\ket{w}$ the optimal witness for $V(x)$ with acceptance probability $p > \frac12$ and $\ket{p}_P$ \emph{exactly} encodes this acceptance probability, then $V'(x)$ accepts with certainty.
For soundness, if the maximal acceptance probability of $V(x)$ is at most $s = \frac12 - \delta < \frac12$, then Proposition 18 in \cite{kobayashi2013stronger} shows that for any $\gamma$, Arthur rejects with probability at least $p_{\text{reject}} \geq 4\delta^2$, which gives a new soundness of $s' = 4s(1 - s)$.
In general, the maximal acceptance probability may not have an exact expression with a polynomial number of bits. In that case, Merlin can send a $q$-bit approximation of $p$, which gives an approximation of $O(2^{-q})$ to the value of $\gamma$ that would give perfect completeness.
Thus, we obtain completeness $1-O(2^{-q})$. The new verifier uses $q$ extra (qu)bits for the witness and verifier (the full verifier controls on $\ket{p}_P$) and at least $O(q+l_M)=O(q+t_A)$ extra steps to read the bits of $\ket{p}_P$, and also to check if $A$ is in the state~$\ket{0}$.

Finally, we briefly relate our results to the quantum oracle separation between ${\sf QMA}$ and ${\sf QMA}_1$ shown in \cite{aaronson09perfectCompQMA}. Aaronson notes that the argument no longer holds when using infinite space. However, the argument \emph{does} apply to the truncations described in this section, and shows that there is no straightforward way to modify the ${\sf QMA}_1$ verifier to a finite dimensional variant that preserves perfect completeness. An interesting open question is whether the argument in \cite{aaronson09perfectCompQMA}, based on analyticity of the maximal acceptance probability with respect to the quantum oracle, can establish a quantitative bound on the gap to perfect completeness (for example, proving obstacles to improving further on \Cref{thm:QMA-small-completeness-error}).

\begin{table}[tbp]
    \centering
    \noindent\makebox[\textwidth][c]{\begin{minipage}{1.15\textwidth}
            \centering
            \begingroup
            \setlength{\tabcolsep}{3pt}
            \begin{tabular}{c|c|c|m{1.25cm}|c|c|c}
                                                   & $1-c'$    & $s'$              & calls to $V$ \& $V^\dagger$ & total time $t_A'$                            & $l_M'$                        & $l_A'$                                      \\
                \hline
                \cite{fefferman2016errorReduction} & $2^{-q}$  & $2^{-q}$          & $O(\frac{q}{c-s})$          & $O(\frac{q}{c-s})t_A$                        & $l_M$                         & $l_A+O(\log\frac{q}{c-s})$                  \\
                \cite{kobayashi2013stronger}       & $ 2^{-q}$ & $4s(1 - s)$       & \centering $2$              & $O(t_A + q)$                                 & $l_M + O(q)$                  & $l_A + O(q)$                                \\
                New                                & $2^{-q}$  & $1-(1-s)(1-4s)^2$ & \centering 3                & $O(t_A+\log \frac{q}{c-s})$                  & $l_M+\log \frac{q}{c-s}+O(1)$ & $l_A+O(\log \frac{q}{c-s})$                 \\
                New+                               & $2^{-q}$  & $2^{-q'}$         & $O(\frac{q'}{c-s})$         & $O\!\!\left(\frac{q'}{c-s}t_A+\log q\right)$ & $l_M+\log q+O(1)$             & $O\!\!\left(l_A+\log\frac{qq'}{c-s}\right)$ \\
            \end{tabular}
            \endgroup
        \end{minipage}}
    \caption{Comparison of our amplification procedure with prior work. The row ``New'' refers to the construction in \Cref{thm:truncate}, and
    ``New+'' refers to the construction obtained by first applying \cite{fefferman2016errorReduction}, and then applying \Cref{thm:truncate}.
    The amplified completeness and soundness are $c'$ and $s'$ respectively. The memory used by Merlin (i.e. the witness)  in the original and amplified protocols are $l_M$ and $l_M'$, respectively. The memory and time used by the verifier (Arthur) in the original protocol is $l_A$ and $t_A$, and in the amplified protocols is $l_A'$ and $t_A'$. Note that both \cite{kobayashi2013stronger} and ``New'' result in a verifier with a \emph{worse} soundness-completeness gap than before: $c'-s'=\Theta((c'-s')^2-2^{-q})$, where \cite{fefferman2016errorReduction} (and ``New+'') increases the gap to almost 1. We have made the completeness the same in all constructions, for comparison purposes, but note that in our two constructions, the dependence on $q$ is logarithmic, so we may take $q=2^{r}$ for any polynomial $r$. This is in contrast to previous work, where $q$ must be polynomial in order for $V'$ to be efficient.}
    \label{tab:comparison}
\end{table}

\section{Proof of \texorpdfstring{\Cref{thm:QMA-inf}}{Theorem 3.2}}\label{sec:proof truncating witness}

To prove \Cref{thm:QMA-inf}, which states that ${\sf QMA}={\sf QMA}^\infty$,
we need to show that a verifier that uses an infinite counter can be truncated to only use a finite-dimensional system. The challenge is that the witness state is unrestricted on the counter, and it is a priori unclear that one can truncate the counter. The idea will be that if the verifier uses in total at most $\ell$ increment operations, and there is a set of consecutive counter states $\ket{d}, \ket{d+1}, \dots, \ket{d+\ell}$ on which the witness state has no support, then the witness can be broken down into separate parts that do not interact.
The idea is now to approximate an arbitrary witness state by a state that has such intervals, for which the following lemma is helpful.

\begin{lemma}\label{lem:remove small consecutive subspace}
    Fix integers $\ell < D$. Consider a Hilbert space $\CC^D$, and for $k = 0, \dots, D - \ell$ let
    \begin{align*}
        \HH_{[0,k-1],[k+\ell,D-1]} = \span \, \{ \ket{d}, \, d \in \{0, \dots, D - 1 \} \setminus \{k, k + 1, \dots, k + \ell - 1\} \}.
    \end{align*}
    Then, for any state $\ket{\psi} \in \CC^D$, there exists $k$ and a state $\ket{\phi_k} \in \HH_{[0,k-1],[k+\ell,D-1]}$ such that
    \begin{align*}
        F(\phi_k, \psi) \geq \sqrt{1 - \frac{\ell}{D - \ell}},
    \end{align*}
\end{lemma}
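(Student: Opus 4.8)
The plan is to reduce the statement to a one-line pigeonhole argument about how the mass of $\ket{\psi}$ is distributed across length-$\ell$ windows of basis states. Write $\ket{\psi} = \sum_{d=0}^{D-1}\psi_d\ket{d}$, and for each admissible $k\in\{0,\dots,D-\ell\}$ let
$w_k = \sum_{d=k}^{k+\ell-1}\abs{\psi_d}^2$
be the weight of $\ket{\psi}$ on the window $\{k,k+1,\dots,k+\ell-1\}$. For fixed $k$, the unit vector in $\HH_{[0,k-1],[k+\ell,D-1]}$ closest to $\ket{\psi}$ in fidelity is the renormalized orthogonal projection $\ket{\phi_k} := \Pi_k\ket{\psi}/\norm{\Pi_k\ket{\psi}}$, where $\Pi_k$ zeros out the coordinates $\psi_k,\dots,\psi_{k+\ell-1}$: indeed, for any unit $\ket{\phi}$ in that subspace, $\abs{\braket{\phi}{\psi}} = \abs{\bra{\phi}\Pi_k\ket{\psi}}\leq \norm{\Pi_k\ket{\psi}}$ by Cauchy--Schwarz, with equality for $\ket{\phi_k}$. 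Since $\norm{\Pi_k\ket{\psi}}^2 = 1-w_k$, this gives $F(\phi_k,\psi) = \abs{\braket{\phi_k}{\psi}} = \sqrt{1-w_k}$. Hence it suffices to exhibit a $k$ with $w_k\leq \frac{\ell}{D-\ell}$.

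For that, I would sum $w_k$ over all $D-\ell+1$ admissible values of $k$ and interchange the order of summation: each coordinate $\abs{\psi_d}^2$ contributes to $w_k$ exactly for those $k$ with $d-\ell+1\leq k\leq d$ and $0\leq k\leq D-\ell$, which is at most $\ell$ values of $k$. Therefore $\sum_{k=0}^{D-\ell} w_k \leq \ell\sum_{d=0}^{D-1}\abs{\psi_d}^2 = \ell$. Averaging over the $D-\ell+1$ terms yields $\min_k w_k \leq \frac{\ell}{D-\ell+1}\leq \frac{\ell}{D-\ell}$, and choosing $\ket{\phi_k}$ for this $k$ completes the proof.

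I do not expect a real obstacle here; the content is entirely the averaging bound above, once the reduction to window weights is made. The only points needing a little care are (i) verifying that projecting-and-renormalizing genuinely maximizes the fidelity over the subspace (the Cauchy--Schwarz step above), and (ii) observing that the claimed inequality is only non-vacuous when $D > 2\ell$ — otherwise $\frac{\ell}{D-\ell}\geq 1$ and $\sqrt{1-\ell/(D-\ell)}$ is $0$ (or undefined), so the bound holds trivially; and in the non-vacuous regime $D-\ell+1 > \ell$, so the chosen $w_k$ is strictly less than $1$, ensuring $\ket{\phi_k}$ is well defined. It is also worth noting that the argument in fact gives the marginally stronger bound $F(\phi_k,\psi)\geq\sqrt{1-\frac{\ell}{D-\ell+1}}$.
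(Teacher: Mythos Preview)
Your proof is correct and essentially identical to the paper's: both expand $\ket{\psi}$ in the standard basis, sum the window weights $w_k=\norm{(I-\Pi_k)\ket{\psi}}^2$ over $k=0,\dots,D-\ell$, bound the double sum by $\ell$, and average to find a $k$ with $w_k\le \ell/(D-\ell+1)\le \ell/(D-\ell)$, then take $\ket{\phi_k}=\Pi_k\ket{\psi}/\norm{\Pi_k\ket{\psi}}$. Your version is slightly more careful in justifying via Cauchy--Schwarz that the renormalized projection maximizes fidelity and in handling the degenerate regime $D\le 2\ell$, but the argument is the same.
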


\begin{proof}
    Let $\Pi_k$ denote the projection onto $\HH_{[0,k-1],[k+\ell,D-1]}$.
    If we expand $\ket{\psi} = \sum_{d=0}^{D-1} {\psi_d} \ket{d}$, we have
    \begin{align*}
        (D - \ell + 1) \min_k \norm{(I - \Pi_k) \ket{\psi}}^2 & \leq \sum_{k = 0}^{D - \ell} \norm{(I - \Pi_k) \ket{\psi}}^2  = \sum_{k = 0}^{D - \ell} \sum_{d = k}^{k + \ell - 1} \abs{\psi_d}^2 \leq \ell \sum_{d = 0}^D \abs{\psi_d}^2 = \ell
    \end{align*}
    so there exists $k$ such that $\norm{(I - \Pi_k) \ket{\psi}}^2 \leq \ell / (D - \ell)$. We then take $\ket{\phi_k} = \Pi_k \ket{\psi} / \norm{\Pi_k \ket{\psi}}$.
\end{proof}

The main work of this section is to prove \Cref{lem:truncate}, which we restate here for convenience.

\vskip10pt
\noindent\textbf{Lemma~\ref*{lem:truncate}.} \emph{Let $V$ be a ${\sf QMA}^\infty(c,s)$ verifier for $L$ with witness register $BW$ and ancilla register $A$, that makes $\ell$ applications of the controlled increment $\incr$.
    Let $m>\log \ell+3$ be bounded from above by a polynomial in $|x|$, and let $D$ be the smallest power of two greater than $2^{m}+\ell$. Let $\tilde{V}_D$ be the $D$-level truncation of $V$, and define its ancillary register $A'$ to consist of $A$, and the $\log(D)-m$ most significant qubits of $\tilde{B}$, and its witness register $W'$ to consist of $W$ and the remaining $m$ (least significant) bits of $\tilde{B}$. Then $\tilde{V}_D$ is a ${\sf QMA}(c',s)$ verifier for $L$ for some $c'=c-O(\sqrt{2^{-m}\ell})$.}
\vskip10pt

An immediate corollary is the following.
\begin{corollary}\label{cor:QMA1-inf-is-QMA-inf-is-QMA}
    For all polynomial $r$, ${\sf QMA}^\infty(1,2^{-r})={\sf QMA}^\infty\left(\frac{1}{2}+\frac{1}{r},\frac{1}{2}-\frac{1}{r}\right)={\sf QMA}$.
\end{corollary}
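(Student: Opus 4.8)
\textbf{Proof proposal for \Cref{cor:QMA1-inf-is-QMA-inf-is-QMA}.}

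The plan is to assemble this corollary from the pieces already in place, with \Cref{lem:truncate} doing the real work. The chain of inclusions I want is
$${\sf QMA}\subseteq{\sf QMA}^\infty(1,2^{-r})\subseteq{\sf QMA}^\infty\Bigl(\tfrac12+\tfrac1r,\tfrac12-\tfrac1r\Bigr)\subseteq{\sf QMA}^\infty\subseteq{\sf QMA},$$
and once this is shown all four classes coincide; specializing gives the stated equalities (noting ${\sf QMA}_1^\infty={\sf QMA}^\infty(1,1/3)$ sits between the first two terms, so it is caught in the collapse as well, which also yields \Cref{thm:QMA-inf-with-different-c-s}). The first inclusion is exactly \Cref{cor:QMAinQMA-inf-exp-soundness}. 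The middle two inclusions are trivial monotonicity statements: shrinking completeness from $1$ to $\tfrac12+\tfrac1r$ and enlarging soundness from $2^{-r}$ to $\tfrac12-\tfrac1r$ only makes the class larger (for $r\ge 2$, say, so that $2^{-r}\le\tfrac12-\tfrac1r$ and $\tfrac12+\tfrac1r\le 1$), and likewise ${\sf QMA}^\infty(\tfrac12+\tfrac1r,\tfrac12-\tfrac1r)\subseteq{\sf QMA}^\infty={\sf QMA}^\infty(2/3,1/3)$ after a round of amplification internal to the infinite-counter model — but I would rather avoid relying on amplification inside ${\sf QMA}^\infty$ (which the paper flags as non-obvious) and instead route everything through the last inclusion.

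So the one substantive step is ${\sf QMA}^\infty(c,s)\subseteq{\sf QMA}(c',s)$ for an appropriate $c'$, which is \Cref{lem:truncate}. Given an instance in ${\sf QMA}^\infty(c,s)$, amplify first inside the (cheap, gate-set-robust) ${\sf QMA}$-style amplification of the \emph{finite} part if needed, but more directly: take the ${\sf QMA}^\infty$ verifier $V$, which makes $\ell={\sf poly}(|x|)$ controlled increments, pick any polynomial $m$ with $m>\log\ell+3$ large enough that $O(\sqrt{2^{-m}\ell})<\frac{c-s}{2}$ (possible since $\ell$ is polynomial, so $m=\Theta(\log|x|)$ or any fixed polynomial suffices), set $D$ to be the smallest power of two exceeding $2^m+\ell$, and apply \Cref{lem:truncate}. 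This produces a genuine (finite-dimensional, standard-gate-set) ${\sf QMA}(c',s)$ verifier with $c'=c-O(\sqrt{2^{-m}\ell})\ge\frac{c+s}{2}>s$, hence a nonvanishing promise gap, so the problem is in ${\sf QMA}$. Applying this with $(c,s)=(1,2^{-r})$ and with $(c,s)=(\tfrac12+\tfrac1r,\tfrac12-\tfrac1r)$ — both of which have an inverse-polynomial gap — gives that the two middle classes are contained in ${\sf QMA}$, and combined with ${\sf QMA}\subseteq{\sf QMA}^\infty(1,2^{-r})$ from \Cref{cor:QMAinQMA-inf-exp-soundness} the whole chain collapses.

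The only point requiring care — and the one I would single out as the "main obstacle," though it is really just bookkeeping — is making sure the hypotheses of \Cref{lem:truncate} are met: $\ell$ must be polynomially bounded (true because $V$ is a polynomial-time verifier, so it has polynomially many gates, hence polynomially many increments), and $m$ must be simultaneously larger than $\log\ell+3$, polynomially bounded, and large enough to push $O(\sqrt{2^{-m}\ell})$ below half the promise gap; since the promise gap in both applications is $\Omega(1/r)$ with $r$ polynomial, choosing $m=\Theta(\log\ell+\log r)=\Theta(\log|x|)$ handles all three constraints at once. One should also double-check that the gate-set caveats in \Cref{cor:QMAinQMA-inf-exp-soundness} and \Cref{lem:truncate} (exact inverses, exact controlled $R_0$, exact controlled increment mod a power of two) are compatible — they are, since ${\sf QMA}$ itself is gate-set independent, so the truncated verifier's use of these specific exact gates only matters for showing membership in ${\sf QMA}$, which it does.
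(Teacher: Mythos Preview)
Your proposal is correct and follows essentially the same route as the paper: invoke \Cref{cor:QMAinQMA-inf-exp-soundness} for ${\sf QMA}\subseteq{\sf QMA}^\infty(1,2^{-r})$, use the trivial monotonicity inclusion into ${\sf QMA}^\infty(\tfrac12+\tfrac1r,\tfrac12-\tfrac1r)$, and then close the loop by applying \Cref{lem:truncate} with $m$ large enough to keep the completeness loss below the inverse-polynomial gap. The only cosmetic difference is that you also apply truncation directly to ${\sf QMA}^\infty(1,2^{-r})$, which is redundant once the chain is established, and your initial four-term chain through ${\sf QMA}^\infty$ is a detour you correctly abandon.
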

\begin{proof}
    By~\Cref{cor:QMAinQMA-inf-exp-soundness}, ${\sf QMA}\subseteq {\sf QMA}^\infty(1,2^{-r})$. Together with the obvious inclusion, we get:
    $${\sf QMA}\subseteq {\sf QMA}^\infty(1,2^{-r})\subseteq {\sf QMA}^\infty\left(\frac{1}{2}+\frac{1}{r},\frac{1}{2}-\frac{1}{r}\right).$$
    Thus, we complete the proof by showing ${\sf QMA}^\infty(1/2+1/r,1/2-1/r)\subseteq {\sf QMA}$.
    Let $V$ be a ${\sf QMA}^\infty(c,s)$ verifier for $L\in {\sf QMA}^\infty(c,s)$, with $c=1/2+1/r$ and $s=1/2-1/r$. Then by \Cref{lem:truncate}, using $m=\log\ell+2r'$ for some polynomial $r'$,
    $$L\in {\sf QMA}(c-O(2^{-r'}),s)\subseteq {\sf QMA}\left(\frac{1}{2}+\frac{1}{r}-2^{-(r'+O(1))},\frac{1}{2}-\frac{1}{r}\right)={\sf QMA},$$
    so ${\sf QMA}^\infty(1/2+1/r,1/2-1/r)\subseteq{\sf QMA}$.
\end{proof}

\noindent\Cref{cor:QMA1-inf-is-QMA-inf-is-QMA} establishes:
\begin{itemize}
    \item \Cref{thm:QMA-inf}: Taking $r=6$, ${\sf QMA}={\sf QMA}^\infty(\frac{1}{2}+\frac{1}{6},\frac{1}{2}-\frac{1}{6})={\sf QMA}^\infty(2/3,1/3)={\sf QMA}^\infty$;
    \item \Cref{thm:QMAisQMA1-inf}: Taking $r=\log(3)$, ${\sf QMA}={\sf QMA}^\infty(1,2^{-\log(3)})={\sf QMA}^\infty(1,1/3)={\sf QMA}_1^\infty$;
    \item \Cref{thm:QMA-inf-with-different-c-s}:
          To show ${\sf QMA}_1^\infty={\sf QMA}^{\infty}\left(1,1-\frac{1}{r}\right)$,
          we will make use of the fact that ${\sf QMA}^\infty(1,1-1/r)\subseteq {\sf QMA}^{\infty}(\frac{1}{2}+\frac{1}{4r-2},\frac{1}{2}-\frac{1}{4r-2})={\sf QMA}$ (by \Cref{cor:QMA1-inf-is-QMA-inf-is-QMA}). The inclusion is established as follows. Let $V$ be any ${\sf QMA}^{\infty}(1,1-1/r)$ verifier. Define $V'$ so that with probability $\frac{1}{2}-\frac{1}{4r-2}$ it rejects, and otherwise it runs $V$. It is easy to verify $V'$ is a ${\sf QMA}^{\infty}(\frac{1}{2}+\frac{1}{4r-2},\frac{1}{2}-\frac{1}{4r-2})$ verifier.
          Then combining this with the obvious inclusions gives:
          $${\sf QMA}_1^\infty={\sf QMA}^{\infty}\left(1,\frac{1}{3}\right)\subseteq{\sf QMA}^\infty\left(1,1-\frac{1}{r}\right)\subseteq %{\sf QMA}^\infty\left(\frac{1}{2}+\frac{1}{4r-2},\frac{1}{2}-\frac{1}{4r-2}\right)=
              {\sf QMA}={\sf QMA}_1^\infty.$$ %Then 
          %${\sf QMA}^\infty(1,1-1/r)\subseteq {\sf QMA}^{\infty}(1-1/(2r),1-1/r)$ by obvious inclusion. But ${\sf QMA}$}
          The remainder of \Cref{thm:QMA-inf-with-different-c-s} is straightforward:
          \begin{itemize}
              \item[$\bullet$] ${\sf QMA}_1^\infty={\sf QMA}={\sf QMA}^{\infty}(1,2^{-r})$
              \item[$\bullet$] ${\sf QMA}^\infty\subseteq {\sf QMA}^\infty(1-2^{-r},2^{-r}) \subseteq{\sf QMA}^\infty(1,2^{-r})={\sf QMA}={\sf QMA}^\infty$, and therefore ${\sf QMA}^\infty={\sf QMA}^\infty(1-2^{-r},2^{-r})$
              \item[$\bullet$] ${\sf QMA}^\infty={\sf QMA}={\sf QMA}^\infty\left(\frac{1}{2}+\frac{1}{r},\frac{1}{2}-\frac{1}{r}\right)$.
          \end{itemize}
\end{itemize}

\noindent Having established these consequences, it remains only to prove \Cref{lem:truncate}.
\begin{proof}[Proof of \Cref{lem:truncate}]
    We introduce the following notation: let $\HH_d$ be the subspace spanned by states of the form $\ket{d}_B \ket{\psi}_{AW}$ and for $d_1<d_2$ define
    \begin{align*}
        \HH_{[d_1,d_2]} = \bigoplus_{d = d_1}^{d_2} \HH_d.
    \end{align*}
    %    for $d_1 < d_2$.

    Recall from \Cref{def:truncate} that $\tilde{V}_D$ is obtained from $V$ by replacing $B$ with $\tilde{B}$ on $\HH_{[0,D-1]}$, and the controlled increment $\incr$ with increment modulo $D$.
    By making the $\log(D)-m$ most significant bits of $\tilde{B}$ part of the ancilla,
    we ensure that the $\tilde B$-part of the witness is restricted to $\HH_{[0,2^m-1]}$.
    Since $V$ only uses $\ell$ applications of the increment, $\tilde B$ is only ever supported on $d<2^m+\ell\leq D-1$,
    so no matter what witness is used, there is no distinction between the increment $\incr$,
    and increment modulo $D$, and so $\tilde{V}_D$ is exactly equivalent to the untruncated verifier $V$ when it uses a witness state with $B$ restricted to $\HH_{[0,2^m-1]}$. This immediately implies soundness $s$ of $\tilde{V}$, since we are simply restricting the allowed witness states as compared with $V$.

    For completeness, it suffices to show that if there exists a witness state $\ket{w}$ with acceptance probability $p\geq c$, there exists a witness state $\ket{w'} \in \HH_{[0,2^m-1]}$ that accepts with probability close to $p$.
    To this end, we consider an arbitrary witness state
    \begin{align*}
        \ket{w} = \sum_{d=0}^{\infty} w_d \ket{\phi_d} , \qquad \ket{\phi_d} \in \HH_d.
    \end{align*}
    We now show that there exists an approximation $\ket{\tilde w}$ of $\ket{w}$ such that $\mathbb{N}$ consists of infinitely many intervals such that: the support of $\ket{\tilde{w}}$ is contained in the even intervals (numbering them from 0), the even intervals have length $\leq 2^{m-2}$, and the odd intervals have length $\ell$.
    To this end, we write
    \begin{align*}
        \ket{w} = \sum_{j = 0}^\infty \sqrt{\beta_j} \ket{\theta_j}, \qquad \ket{\theta_j} \in \HH_{[j 2^{m-3}, (j+1) 2^{m - 3} - 1]}.
    \end{align*}
    We may now apply \Cref{lem:remove small consecutive subspace} to obtain states $\ket{\tilde \theta_j}$ that are not supported on some subspace $\HH_{[k_j, k_j+\ell-1]}$ for some $k_j\in [j 2^{m-3},(j+1) 2^{m - 3} - \ell]$, and which have fidelity with $\ket{ \theta_j}$ bounded as $F(\theta_j, \tilde \theta_j) \geq \sqrt{1 - \ell(2^{m - 3} - \ell)^{-1}}$. We now define $\ket{\tilde w} = \sum_j \sqrt{\beta_j} \ket{\tilde \theta_j}$, which has distance to $\ket{w}$ bounded by
    \begin{align}\label{eq:T}
        T(w, \tilde w) \leq \sqrt{1 - F(w, \tilde w)^2} = \sqrt{1 - \sum_j \beta_j F(\theta_j, \tilde \theta_j)^2} \leq \sqrt{\frac{\ell}{2^{m-3} - \ell}}.
    \end{align}
    The figure below shows the first part $\mathbb{N}$, with the support of each $\ket{\theta_j}$ an equal sized block of $2^{m-3}$. Each such block has some length-$\ell$ interval -- the $[k_j,k_j+\ell-1]$ -- removed, shown in grey. The support of $\ket{\tilde w}$ is thus within the white parts of the number line, which naturally decomposes into white intervals, the largest of which has length at most $2(2^{m-3}-\ell)=2^{m-2}-2\ell$.
    \begin{center}
        \begin{tikzpicture}[scale=0.4, every node/.style={scale=0.8}]
            % Draw the 1x40 grid
            \foreach \x in {0,...,39} {
                    \draw (\x, 0) rectangle ++(1, 1);
                }

            % Shade the 3-block segments
            \foreach \x in {2,3,4, 12,13,14, 19,20,21, 25,26,27, 37,38,39} {
                    \fill[gray!30] (\x, 0) rectangle ++(1, 1);
                }

            % Upward curly braces above each 8-block segment
            \foreach \i in {0,1,2,3,4} {
                    \draw[decorate,decoration={brace,amplitude=5pt}]
                    ({\i*8}, 1.2) -- ({\i*8 + 8}, 1.2)
                    node[midway, yshift=10pt] {$\theta_{\i}$};
                }

            % Downward curly braces for unshaded contiguous segments
            \foreach \start/\fin/\i in {0/2/0, 5/12/1, 15/19/2, 22/25/3, 28/37/4} {
                    \draw[decorate,decoration={brace,amplitude=4pt,mirror}]
                    (\start, 0) -- (\fin, 0)
                    node[midway, yshift=-10pt] {$\psi_{\i}$};
                }
        \end{tikzpicture}
    \end{center}
    In other words, the state $\ket{\tilde w}$ can be written as
    \begin{align*}
        \ket{\tilde w} = \sum_{i = 0}^{\infty} \sqrt{\alpha_i} \ket{\psi_i}, \qquad \ket{\psi_i} \in \HH_{[d_i, d_i + D_i - 1]}
    \end{align*}
    for some natural numbers $\{d_i,D_i\}_{i=0}^\infty$ such that
    $d_0 = 0$; $\forall i\in \mathbb{N}$, $d_{i+1} \geq d_i + D_i + l - 1$; and $\forall i\in \mathbb{N}$, $D_i \leq 2^{m-2}-2\ell$. The diagram above shows the length-$D_i$ support of each $\ket{\psi_i}$.

    The acceptance probability using the witness $\ket{\tilde w}$ is given by
    \begin{align*}
        p_{\accept}(\tilde w) = \norm{\Pi_{\accept} \tilde{V}(x) \ket{0}_A \ket{\tilde w}}^2=\norm{\Pi_{\accept} V(x) \ket{0}_A \ket{\tilde w}}^2 = \Big\lVert \sum_i \sqrt{\alpha_i} \underbrace{\Pi_{\accept} V(x) \ket{0}_A \ket{\psi_i}}_{=:\ket{\xi_i}} \Big\rVert^2.
    \end{align*}
    We now note that $\Pi_{\accept}$ preserves the spaces $\HH_d$, and since the verifier uses at most $\ell$ (controlled) increments, we have that $V(x) \ket{0}_A \ket{\psi_i} \in \HH_{[d_i, d_i + D_i + \ell - 1]}$.
    This implies that the (unnormalized) states $\ket{\xi_i}$ are orthogonal, so
    \begin{align*}
        p_{\accept}(\tilde w) = \sum_i \alpha_i \norm{\ket{\xi_i}}^2.
    \end{align*}
    Next, we note that for $i > 0$, starting from $\ket{\psi_i}$, the state is fully supported on the subspace $\ket{d}$ for $d \geq 1$, so the controlled $R_0$ is applied deterministically.
    From this, it is easy to see that $\norm{\ket{\xi_i}}^2$ is the same if we uniformly shift the counter register without reaching the $d = 0$ state, so
    \begin{align*}
        \norm{\ket{\xi_i}}^2 = \norm{\Pi_{\accept} V(x) \ket{0} (\incr^\dagger)^d \ket{\psi_i}}^2
    \end{align*}
    for any $d < d_i$.
    We can bound
    \begin{align}\label{eq:bound-truncated-acceptance-probability}
        p_{\accept}(\tilde w) \leq \alpha_0 \norm{\ket{\xi_0}}^2 + (1 - \alpha_0) \max_{i \geq 1} \norm{\ket{\xi_{i}}}^2
    \end{align}
    which corresponds to the acceptance probability of a state $\ket{w'}$
    \begin{align*}
        \ket{w'} = \sqrt{\alpha_0} \ket{\psi_0} + \sqrt{1 - \alpha_0} \ket{\psi'} \qquad
    \end{align*}
    where $\ket{\psi'}=(\incr^\dagger)^d\ket{\psi_i}$ for $i$ the maximizer in \cref{eq:bound-truncated-acceptance-probability}, and $d=d_i-D_0-\ell$
    so that $\ket{\psi'}\in \HH_{[D_0 + \ell, D_0 + \ell + D_i - 1]}$.
    In particular, $\ket{w'} \in \HH_{[0, D_0 + \ell + D_i]} \subseteq \HH_{[0, 2^{m -1} -2\ell+ \ell]}\subseteq \HH_{[0,2^m-1]}$.
    We conclude that, using \cref{eq:T}:
    \begin{align*}
        p_{\accept}(w') \geq p_{\accept}(\tilde w) \geq p_{\accept}(w) - T(w, \tilde w) \geq c - \bigO{\sqrt{2^{-m}\ell}}.
    \end{align*}

    To complete the proof, we need to argue that $\tilde{V}_D$ runs in polynomial time. It consists of $O(1)$ calls to $V$ and $V^\dagger$, each of which takes time bounded by some polynomial $t_A$; an increment modulo $D$, which takes time $O(\log D)$, and a reflection $R_0$ on $\log\dim A \leq l_A$ qubits, controlled on $\log(D)$ bits, which takes time $O(\log(D)+l_A)$. Thus, we can upper bound the running time of $\tilde{V}_D$ by
    $$\tilde{t}_A=O(t_A+\log(D)+l_A).$$
    We can assume $l_A=O(t_A)$, as otherwise there are ancilla qubits that the verifier never uses. Moreover, $D=\Theta(2^m+\ell)$, so since $m$ is polynomial, so is $\log(D)$. Thus $\tilde{t}_A$ is polynomial in $|x|$.
\end{proof}

\begin{remark}\label{rmk:extensions}
    For the sake of simplicity, and to minimally enhance the computational power of the verifier, we defined ${\sf QMA}$ with a \emph{single} counter that can only \emph{increment}.
    However, it is not hard to strengthen the proof of \Cref{lem:truncate} to allow for multiple tapes, or for a counter that can also decrement, for which we now sketch the arguments.
    Let us start with the second case. In order to make the counter invertible, we allow a two-sided infinite system with states $\ket{d}$ for $d \in \ZZ$. The control on $B$ is still on $d \geq 1$.
    We can now allow controlled increments and decrements. The truncation argument is modified as follows. If there are $l$ incrementing or decrementing operations, it now suffices to approximate the state with non-support intervals of length $2l$. We get states $\ket{\psi_i}$ in the same way, but now we separately treat states with support on the interval around zero, the states with $d > 0$ and the states with $d < 0$.
    For the case with multiple counters $B_1, \dots, B_m$ for polynomial $m$, (where we assume each $B_i$ interacts with a set of qubits as before), we note that the argument in \Cref{lem:truncate} does not use that $\HH_d$ is finite-dimensional. So, that means that we can use \Cref{lem:truncate} to truncate the $B_i$ one by one.
\end{remark}

\section{Bosonic quantum computers}\label{sec:bosonic quantum computers}
A good reason to consider infinite dimensional Hilbert spaces is that many physical systems are in fact modelled by infinite dimensional Hilbert spaces, and restricting to finite dimensional subspaces is only an approximation.
This is in particular the case for bosonic systems, which are most prominent in quantum optics.
Here, the Hilbert space is a tensor product of spaces $\HH_x$ for $x \in S$.
Each $\HH_x$ has an orthonormal basis $\ket{d}$ for $d \in \ZZ_{\geq 0}$, where $d$ is called the \emph{particle number}.
We define the energy by the Hamiltonian
\begin{align*}
    H_x = \sum_{d=0}^{\infty} d \ketbra{d}{d}_x \ot I_{S \setminus x}, \qquad H = \sum_{x \in S} H_x.
\end{align*}
Up to rescalings and constant factors we can ignore for our purpose, this corresponds to the standard Hamiltonians for electromagnetic modes in quantum optics.
In this setting, gate sets are typically defined by time evolution along Hamiltonians that are polynomials in the creation and annihilation operators
\begin{align*}
    a \ket{d} = \sqrt{d}\ket{d - 1} \quad \text{ and } \quad a^\dagger \ket{d} = \sqrt{d+1}\ket{d+1}.
\end{align*}
The increment operator $\incr$ is not easily constructed this way.

The status of variants of ${\sf BQP}$ or ${\sf QMA}$ in this bosonic setting has been studied, but their precise relation to the standard notions in unclear \cite{chabaud2025bosonic,upreti2025bounding}.
If one takes a gate set of operations $\{U_i\}$ that increase the energy at most polynomially, then this gives no greater computation power than ${\sf BQP}$. If we define ${\sf QMA}$ with these operations, and additionally constrain the witness to have at most polynomial energy, this also does not increase the computational power \cite{arzani2025can}.

Our results suggest several interesting questions: Can one achieve perfect completeness for ${\sf QMA}$ with a physically reasonable bosonic gate set? What happens in this setting to ${\sf QMA}$ when we do not constrain the witness? It seems possible that a similar argument as in \Cref{thm:QMA-inf} goes through, but we leave this for future work.

\subsection*{Acknowledgments}

We thank Scott Aaronson, Simon Apers, Aleksandrs Belovs, Vedran Dunjko, David Gosset and Tamara Kohler for useful comments and discussions about the ideas in this paper.

This work is co-funded by the European Union (ERC, ASC-Q, 101040624); and Divide \& Quantum  (with project number 1389.20.241) of the research programme NWA-ORC, which is (partly) financed by the Dutch Research Council (NWO). SJ is a CIFAR Fellow in the Quantum Information Science Program.

\bibliographystyle{alpha}
\bibliography{refs}

\end{document}